\theoremstyle{plain}
\numberwithin{equation}{section}
\newtheorem{thm}{Theorem}[section]
\newtheorem{lem}[thm]{Lemma}
\newtheorem{cor}[thm]{Corollary}
\newenvironment{exam}[1]
{\begin{flushleft}\textbf{Example #1}.\enspace}%
{\end{flushleft}}
\newcommand{\complex}{{\mathbb C}}
\newcommand{\real}{{\mathbb R}}
\newcommand{\tbullet}{\raise .4ex\hbox{\tiny$\bullet$}} 
\newcommand{\iscripthat}{\widehat{\iscript}}
\newcommand{\jscripthat}{\widehat{\jscript}}
\newcommand{\cbar}{\overline{c}}
\newcommand{\rmtr}{\mathrm{tr\,}}
\newcommand{\rmin}{\mathrm{In\,}}
\newcommand{\rmid}{\mathrm{Id\,}}
\newcommand{\ityes}{\textit{yes}}
\newcommand{\itno}{\textit{no}}
\newcommand{\ascript}{\mathcal{A}}
\newcommand{\cscript}{\mathcal{C}}
\newcommand{\escript}{\mathcal{E}}
\newcommand{\iscript}{\mathcal{I}}
\newcommand{\jscript}{\mathcal{J}}
\newcommand{\kscript}{\mathcal{K}}
\newcommand{\lscript}{\mathcal{L}}
\newcommand{\mscript}{\mathcal{M}}
\newcommand{\oscript}{\mathcal{O}}
\newcommand{\pscript}{\mathcal{P}}
\newcommand{\sscript}{\mathcal{S}}
\newcommand{\ab}[1]{\left|#1\right|}
\newcommand{\doubleab}[1]{\left|\left|#1\right|\right|}
\newcommand{\brac}[1]{\left\{#1\right\}}
\newcommand{\paren}[1]{\left(#1\right)}
\newcommand{\sqbrac}[1]{\left[#1\right]}
\newcommand{\elbows}[1]{{\left\langle#1\right\rangle}}
\newcommand{\ket}[1]{{\left|#1\right>}}
\newcommand{\bra}[1]{{\left<#1\right|}}
\begin{document}

\title{FINITE QUANTUM INSTRUMENTS}
\author{Stan Gudder\\ Department of Mathematics\\
University of Denver\\ Denver, Colorado 80208\\
sgudder@du.edu}
\date{}
\maketitle

\begin{abstract}
This article considers quantum systems described by a finite-dimensional complex Hilbert space $H$. We first define the concept of a finite observable on $H$. We then discuss ways of combining observables in terms of convex combinations, post-processing and sequential products. We also define complementary and coexistent observables. We then introduce finite instruments and their related compatible observables. The previous combinations and relations for observables are extended to instruments and their properties are compared. We present four types of instruments; namely, identity, trivial, L\"uders and Kraus instruments. These types are used to illustrate different ways that instruments can act. We next consider joint probabilities for observables and instruments. The article concludes with a discussion of measurement models and the instruments they measure.
\end{abstract}

\section{Finite Observables}  
Let $\lscript (H)$ be the set of linear operators on a finite-dimensional complex Hilbert space $H$. For $S,T\in\lscript (H)$ we write $S\le T$ if $\elbows{\phi ,S\phi}\le\elbows{\phi ,T\phi}$ for all $\phi\in H$. We define the set of \textit{effects} by
\begin{equation*}
\escript (H)=\brac{a\in\lscript (H)\colon 0\le a\le 1}
\end{equation*}
where $0,1$ are the zero and identity operators, respectively. The effects correspond to \ityes -\itno\ experiments and
$a\in\escript (H)$ is said to \textit{occur} when a measurement of $a$ results in the value \ityes\ \cite{bgl95,hz12,kra83}.
A one-dimensional projection $P_\phi =\ket{\phi}\bra{\phi}$, where $\doubleab{\phi}=1$, is called an \textit{atom} and
$P_\phi\in\escript (H)$. We call $\rho\in\escript (H)$ a \textit{partial state} if $\rmtr (\rho )\le 1$ and $\rho$ is a \textit{state} if
$\rmtr (\rho )=1$. We denote the set of states by $\sscript (H)$ and the set of partial states by $\sscript _p(H)$. If
$\rho\in\sscript (H)$, $a\in\escript  (H)$ we call $\pscript _\rho (a)=\rmtr (\rho a)$ the \textit{probability that} $a$ \textit{occurs} in the state $\rho$ \cite{bgl95,hz12,kra83,nc00}.

We denote the unique positive square-root of $a\in\escript (H)$ by $a^{1/2}$. For $a,b\in\escript (H)$, their
\textit{sequential product} is the effect $a\circ b=a^{1/2}ba^{1/2}$ where $a^{1/2}ba^{1/2}$ is the usual operator product \cite{gg02,gn01,lud51}. We interpret $a\circ b$ as the effect that results from first measuring $a$ and then measuring $b$. We also call $a\circ b$ the effect $b$ \textit{conditioned on} the effect $a$ and write $(b\mid a)=a\circ b$. Notice that $\escript (H)$ is convex in the sense that if $b_i\in\escript (H)$ and $\lambda _i\ge 0$ with $\sum\limits _{i=1}^n\lambda _i=1$, then
$\sum\lambda _ib_i\in\escript (H)$ and
\begin{equation*}
\paren{\sum\lambda _ib_i\mid a}=\sum\lambda _i(b_i\mid a)
\end{equation*}
so $b\mapsto (b\mid a)$ is an affine function. In general, $a\mapsto (b\mid a)$ is not affine. Moreover, the product $\circ$ is not associative.

\begin{exam}{1}  
Let $a=\ket{\alpha}\bra{\alpha}$, $b=\ket{\beta}\bra{\beta}$ be atoms in $\escript (H)$. Then for any $c\in\escript (H)$ we have that
\begin{align*}
a\circ (b\circ c)&=\ket{\alpha}\bra{\alpha}\paren{\ket{\beta}\bra{\beta}c\ket{\beta}\bra{\beta}}\ket{\alpha}\bra{\alpha}\\
   &=\ab{\elbows{\alpha ,\beta}}^2\elbows{\beta ,c\beta}\ket{\alpha}\bra{\alpha}
\end{align*}
Moreover,
\begin{align*}
(a\circ b)\circ c&=\paren{\ket{\alpha}\bra{\alpha}\,\ket{\beta}\bra{\beta}\,\ket{\alpha}\bra{\alpha}}\circ c\\
   &=\ab{\elbows{\alpha ,\beta}}^2\ket{\alpha}\bra{\alpha}c\ket{\alpha}\bra{\alpha}\\
   &=\ab{\elbows{\alpha ,\beta}}^2\elbows{\alpha ,c\alpha}\ket{\alpha}\bra{\alpha}
\end{align*}
In general, $\elbows{\beta ,c\beta}\ne\elbows{\alpha ,c\alpha}$ so $a\circ (b\circ c)\ne (a\circ b)\circ c$.\hfill\qedsymbol
\end{exam}

Let $\Omega _A$ be a finite set. A \textit{finite observable with value-space} $\Omega _A$ is a subset
\begin{equation*}
A=\brac{A_x\colon x\in\Omega _A}\subseteq\escript (H)
\end{equation*}
such that $\sum\limits _{x\in\Omega _A}A_x=1$ \cite{bgl95,hz12,nc00}. We denote the set of finite observables on $H$ by
$\oscript (H)$. If $B=\brac{B_y\colon y\in\Omega _B}$ is another observable, we define the \textit{sequential product}
$A\circ B\in\oscript (H)$ to be the observable with value-space $\Omega _A\times\Omega _B$ given by
\begin{equation*}
A\circ B=\brac{A_x\circ B_y\colon (x,y)\in\Omega _A\times\Omega _B}
\end{equation*}
We also define the $A$-\textit{marginal} of $A\circ B$ to be the observable $(B\mid A)$ with value-space $\Omega _B$ given by
\begin{equation*}
(B\mid A)=\brac{(B\mid A)_y\colon y\in\Omega _B}\subseteq\escript (H)
\end{equation*}
where $(B\mid A)_y=\sum\limits _{x\in\Omega _A}(A_x\circ B_y)$. Since $\sum\limits _{y\in\Omega _B}(A_x\circ B_y)=A_x$ we say the $B$-\textit{marginal} of $A\circ B$ is $A$. We also call $(B\mid A)$ the observable $B$ \textit{conditioned} by the observable $A$ \cite{gud120,gud220}.

If $A\in\oscript (H)$ we define the effect-valued measure $X\mapsto A_X$ from $2^{\Omega _A}$ to $\escript (H)$ by
$A_X=\sum\limits _{x\in X}A_x$. By a slight misuse of terminology, we call $X\mapsto A_X$ an \textit{observable}. Moreover, we have the observable
\begin{equation*}
(B\mid A)_Y=\sum _{y\in Y}\sum _{x\in\Omega _A}(A_x\circ B_y)=\sum _{x\in\Omega _A}(A_x\circ B_Y)
\end{equation*}
and the observable
\begin{equation*}
(A\circ B)_\Delta =\sum _{(x,y)\in\Delta}(A_x\circ B_y)
\end{equation*}
In particular,
\begin{equation*}
(A\circ B)_{X\times Y}=\sum _{x\in X}(A_x\circ B_Y)
\end{equation*}
and we call $(A\circ B)_{X\times Y}$ the effect $(A_X\hbox{ then }B_Y)$ \cite{gud120,gud220}. It follows that
\begin{equation*}
(B\mid A)_Y=(A\circ B)_{\Omega _A\times Y}
\end{equation*}

If $\rho\in\sscript (H)$ and $A\in\oscript (H)$, the \textit{probability that} $A$ \textit{has a value in} $X\subseteq\Omega _A$, when the system is in state $\rho$ is $\pscript _\rho (A_X)=\rmtr(\rho A_X)$. Notice that $X\mapsto\pscript _\rho (A_X)$ is a probability measure on $\Omega _A$. We call
\begin{equation*}
\pscript _\rho (A_X\hbox{ then }B_Y)=\rmtr\sqbrac{\rho (A\circ B)_{X\times Y}}
\end{equation*}
the \textit{joint probability} of $A_X\hbox{ then }B_Y$. We now give alternative ways of writing this:
\begin{align*}
\pscript _\rho (A_X\hbox{ then }B_Y)&=\sum _{x\in X}\sum _{y\in Y}\rmtr (\rho A_x\circ B_y)
   =\sum _{x\in X}\rmtr (\rho A_x\circ B_Y)\\
   &=\rmtr\paren{\sum _{x\in X}A_x^{1/2}\rho A_x^{1/2}B_Y}=\rmtr\sqbrac{\sum _{x\in X}(A_x\circ\rho )B_Y}\\
   &=\rmtr\sqbrac{B_Y\circ\paren{\sum _{x\in X}(A_x\circ\rho )}}
\end{align*}

If $B^{(i)}\in\oscript (H)$ with the same value-space $\Omega$ and $\lambda _i\in\sqbrac{0,1}$, $i=1,2,\ldots ,n$, with
$\sum\lambda _i=1$ we can form the \textit{convex combination} observable $\sum\lambda _iB^{(i)}$ with
\begin{equation*}
\paren{\sum\lambda _iB^{(i)}}_y=\sum\lambda _iB_y^{(i)}
\end{equation*}
for all $y\in\Omega$ \cite{gud120,gud220,hz12}. We then have that
\begin{align*}
\paren{A\circ\sum\lambda _iB^{(i)}}_{(x,y)}&=A_x\circ\paren{\sum\lambda _iB^{(i)}}_y=\sum\lambda _iA_x\circ B_y^{(i)}\\
   &=\sum\lambda _i(A\circ B^{(i)})_{(x,y)}
\end{align*}
Hence,
\begin{equation*}
A\circ\sum\lambda _iB^{(i)}=\sum\lambda _iA\circ B^{(i)}
\end{equation*}
On the other hand
\begin{equation*}
\sqbrac{\sum\lambda _iA^{(i)}}\circ B\ne\sum\lambda _i(A^{(i)}\circ B)
\end{equation*}
in general. We also have that
\begin{align*}
\paren{\sum\lambda _iB^{(i)}\mid A}_y&=\sum _{x\in\Omega _A}\sqbrac{A_x\circ\paren{\sum\lambda _iB^{(i)}}_y}
   =\sum _{x\in\Omega _Aß}\sqbrac{A_x\circ\paren{\sum\lambda _iB_y^{(i)}}}\\
   &=\sum _{i=1}^n\lambda _i\sum _{x\in\Omega _A}A_x\circ B_y^{(i)}=\sum _{i=1}^n\lambda _i(B^{(i)}\mid A)_y
\end{align*}
Thus,
\begin{equation*}
\paren{\sum\lambda _iB^{(i)}\mid A}=\sum\lambda _i(B^{(i)}\mid A)
\end{equation*}
As before,
\begin{equation*}
\paren{B\mid\sum\lambda _iA^{(i)}}\ne\sum\lambda _i(B\mid A^{(i)})
\end{equation*}
in general.

Let $\nu =\sqbrac{\nu _{yz}}$ be a stochastic matrix with $\nu _{yz}\ge 0$, $\sum\limits _{z\in\Omega}\nu _{yz}=1$ for all
$y\in\Omega _B$ where $B\in\oscript (H)$. We define the observable $C=\nu\tbullet B$ with value-space $\Omega$ by
\begin{equation*}
(\nu\tbullet B)_z=\sum _{y\in\Omega _B}\nu _{yz}B_y
\end{equation*}
and call $C$ a \textit{post-processing} of $B$ \cite{gud120,gud220,hz12}. We then obtain
\begin{align*}
(A\circ\nu\tbullet B)_{(x,z)}&=A_x\circ (\nu\tbullet B)_z=A_x\circ\sum _{y\in\Omega _B}\nu _{yz}B_y\\
  &=\sum _{y\in\Omega _B}\nu _{yz}(A_x\circ B_y)
\end{align*}
Defining $\nu '_{\paren{(x,y),(x',z)}}=\nu _{yz}\delta _{xx'}$ we have that $\nu '$ is stochastic because
\begin{equation*}
\sum _{(x',z)}\nu _{\paren{(x,y),(x',z)}}=\sum _{(x',z)}\nu _{yz}\delta _{xx'}=\sum _z\nu _{yz}=1
\end{equation*}
Then
\begin{align*}
(A\circ\nu\tbullet B)_{(x',z)}&=\sum _{(x,y)\in\Omega _A\otimes\Omega _B}\nu _{\paren{(x,y),(x',z)}}(A_x\circ B_y)\\
   &=\sqbrac{\nu '\tbullet (A\circ B)}_{(x',z)}
\end{align*}
Hence, $A\circ\nu\tbullet B=\nu '(A\circ B)$ where $\nu '$ is essentially the same as $\nu$. Moreover,
\begin{align*}
(\nu\tbullet B\mid A)_z&=\sum _{x\in\Omega _A}A_x\circ (\nu\tbullet B)_z
   =\sum _{x\in\Omega _A}A_x\circ\paren{\sum _{y\in\Omega _B}\nu _{yz}B_y}\\
   &=\sum _{y\in\Omega _B}\nu _{yz}\sum _{x\in\Omega _A}A_x\circ B_y=\sum _{y\in\Omega _B}\nu _{yz}(B\mid A)_y
   =\sqbrac{\nu\tbullet (B\mid A)}_z
\end{align*}
so we conclude that $(\nu\tbullet B\mid A)=\nu\tbullet (B\mid A)$. Again, $(B\mid\nu\tbullet A)\ne\nu\tbullet (B\mid A)$ in general.

We now briefly consider three or more observables. For example, if $A,B,C\in\oscript (H)$, then
$A\circ (B\circ C)\in\oscript (H)$ is given by
\begin{equation*}
\sqbrac{A\circ (B\circ C)}_{(x,y,z)}=A_x\circ (B\circ C)_{(y,z)}=A_x\circ (B_y\circ C_z)
\end{equation*}
and for $\Delta\subseteq\Omega _A\times\Omega _B\times\Omega _C$ we define
\begin{equation*}
\sqbrac{A\circ (B\circ C)}_\Delta =\sum _{(x,y,z)\in\Delta}A_x\circ (B_y\circ C_z)
\end{equation*}
In particular, for $X\subseteq\Omega _A$, $Y\subseteq\Omega _B$, $Z\subseteq\Omega _C$ we have
\begin{align*}
\sqbrac{A\circ (B\circ C)}_{X\times Y\times Z}=\sum _{(x,y)\in X\times Y}A_x\circ (B_y\circ C_Z)\\
=\sum _{x\in X}\brac{A_x\circ\sqbrac{\sum _{y\in Y}(B_y\circ C_Z)}}
\end{align*}
Moreover, $\paren{(C\mid A)\mid B}\in\oscript (H)$ is given by
\begin{equation*}
\paren{(C\mid A)\mid B}_z=\sum _{y\in\Omega _B}\sqbrac{B_y\circ (C\mid A)_z}
  =\sum _{y\in\Omega _B}\sqbrac{B_y\circ\sum _{x\in\Omega _A}(A_x\circ C_z)}
\end{equation*}
and we have for every $Z\subseteq\Omega _C$ that
\begin{equation*}
\sqbrac{(C\mid A)\mid B}_Z=\sum _{y\in\Omega _B}\sqbrac{B_y\circ\sum _{x\in\Omega _A}(A_x\circ C_Z)}
\end{equation*}

We next discuss various types of observables. We call $B\in\oscript (H)$ an \textit{identity observable} if $B_y=\lambda _y1$ where $\lambda _y\in\real$ for every $y\in\Omega _B$. It follows that $\lambda _y\ge 0$ and
$\sum\limits _{y\in\Omega _B}\lambda _y=1$. Identity observables are the simplest types of observables. A convex combination of identity observables is an identity observable. Indeed, let $B^{(i)}$ be identity observables with
$B_y^{(i)}=\lambda _y^i1$, $y\in\Omega$, $i=1,2,\ldots ,n$. If $\mu_i\in\sqbrac{0,1}$ with $\sum\limits _{i=1}^n\mu _i=1$ we have that
\begin{equation*}
\sqbrac{\sum _{i=1}^n\mu _iB^{(i)}}_y=\sum _{i=1}^n\mu _iB_y^{(i)}=\sum _{i=1}^n\mu _i\lambda _y^i1
\end{equation*}
so $\sum\mu _iB^{(i)}$ is an identity observable. Also, if $B$ is an identity observable, the a post-processing of $B$ becomes
\begin{equation*}
(\nu\tbullet B)_z=\sum _{y\in\Omega _B}\nu _{yz}B_y=\sum _{y\in\Omega _B}\nu _{yz}\lambda _y1
\end{equation*}
and $\nu\tbullet B$ is an identity observable. If $A\in\oscript (H)$ and $B$ is an identity observable, we have that
\begin{align*}
(A\circ B)_{(x,y)}&=A_x\circ B_y=\lambda _yA_x
\intertext{and}
(B\mid A)_y&=\sum _{x\in\Omega _A}(A_x\circ B_y)=\sum _{x\in\Omega _A}\lambda _yA_x=\lambda _y1=B_y
\end{align*}
Of course, this latter property holds whenever $A$ and $B$ commute; that is $A_xB_y=B_yA_x$ for all $x\in\Omega _A$,
$y\in\Omega _B$. It also follows that if $A$ and $B$ are identity observables, then so is $A\circ B$.

An observable $A$ is \textit{atomic} if $A_x$ is an atom for all $x\in\Omega _A$. If $A$ is atomic, then
$A_x=\ket{\phi _x}\bra{\phi _x}$ and it follows that $\phi _x\perp\phi _{x'}$ for all $x\ne x'$. An $A\in\oscript (H)$ is \textit{indecomposable} if $A_x$ has rank~1 for all $x\in\Omega _x$. Clearly, an atomic observable is indecomposable but the converse does not hold. If
$A\in\oscript (H)$ is atomic with $A_x=\ket{\phi _x}\bra{\phi _x}$ and $B\in\oscript (H)$, we have
\begin{equation*}
(A\circ B)_{(x,y)}=\ket{\phi _x}\bra{\phi _x}B_y\ket{\phi _x}\bra{\phi _x}=\elbows{\phi _x,B_y\phi _x}P_{\phi _x}
\end{equation*}
so $A\circ B$ is indecomposable but is not atomic. It follows that if $A$ is indecomposable, then $A\circ B$ is also indecomposable for any $B\in\oscript (H)$. Moreover, we have that
\begin{equation*}
(B\mid A)_y=\sum _{x\in\Omega _A}(A_x\circ B_y)=\sum _{x\in\Omega _A}\elbows{\phi _x,B_y\phi _x}P_{\phi _x}
\end{equation*}
It follows that $(B\mid A)$ and $(C\mid A)$ commute for all $B,C\in\oscript (H)$.

We say that $A,B\in\oscript (H)$ \textit{coexist} if there exists an observable $C_{(x,y)}$ with value-space
$\Omega _A\times\Omega _B$ such that $A_x=\sum\limits _{y\in\Omega _B}C_{(x,y)}$ and
$B_y=\sum\limits _{x\in\Omega _A}C_{(x,y)}$ for all $x\in\Omega _A$, $y\in\Omega _B$ \cite{bgl95,hz12,kra83}.
Coexistence of $A$ and $B$ is interpreted as $A$ and $B$ being simultaneously measurable. We call $C_{(x,y)}$ a
\textit{joint observable} for $A$ and $B$. if $A$ and $B$ commute, then they coexist with joint observable $C_{(x,y)}=A_xB_y$. As a special case, we say that $a,b\in\escript (H)$ \textit{coexist} if there exist $a_1,b_1,c\in\escript (H)$ such that $a_1+b_1+c\le 1$ and $a=a_1+c$, $b=b_1+c$. The \textit{complement} of $a\in\escript (H)$ is defined as $a'=1-a$.

\begin{lem}    
\label{lem11}
{\rm\cite{hz12}}Two effects $a$ and $b$ coexist if and only if the observables $A=\brac{a,a'}$, $B=\brac{b,b'}$ exist.
\end{lem}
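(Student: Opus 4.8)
Since $a+a'=1$ and $b+b'=1$, the sets $A=\brac{a,a'}$ and $B=\brac{b,b'}$ are automatically two-outcome observables, so the content of the statement is whether they coexist, i.e.\ admit a joint observable. The plan is to recognize such a joint observable as nothing more than a $2\times 2$ array of effects summing to $1$ with prescribed row and column sums, and to match that array against the data witnessing coexistence of the effects $a$ and $b$. Writing $\Omega_A=\Omega_B=\brac{0,1}$ with $A_0=a$, $A_1=a'$, $B_0=b$, $B_1=b'$, a joint observable $C_{(x,y)}$ must consist of effects $C_{(0,0)},C_{(0,1)},C_{(1,0)},C_{(1,1)}$ with $C_{(0,0)}+C_{(0,1)}=a$ and $C_{(0,0)}+C_{(1,0)}=b$, the remaining two marginal equations then being automatic. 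Everything rests on the dictionary $c=C_{(0,0)}$, $a_1=C_{(0,1)}$, $b_1=C_{(1,0)}$, with the forced corner $C_{(1,1)}=1-a_1-b_1-c$.

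For the forward implication I would start from effects $a_1,b_1,c\in\escript (H)$ with $a_1+b_1+c\le 1$, $a=a_1+c$ and $b=b_1+c$, and define $C$ on $\Omega_A\times\Omega_B$ by $C_{(0,0)}=c$, $C_{(0,1)}=a_1$, $C_{(1,0)}=b_1$, $C_{(1,1)}=1-a_1-b_1-c$. I would then verify the three requirements: each entry is an effect, the only nonroutine point being $0\le C_{(1,1)}$, which is exactly the hypothesis $a_1+b_1+c\le 1$ (and $C_{(1,1)}\le 1$ since $a_1,b_1,c\ge 0$); the four entries sum to $1$ by construction; and the marginals reproduce $A$ and $B$, for instance $C_{(0,0)}+C_{(0,1)}=c+a_1=a$ and $C_{(1,0)}+C_{(1,1)}=b_1+(1-a_1-b_1-c)=a'$, the $B$-marginals being identical. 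This exhibits $C$ as a joint observable for $A$ and $B$.

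For the converse I would read the effect data off a given joint observable $C_{(x,y)}$: set $c=C_{(0,0)}$, $a_1=C_{(0,1)}$, $b_1=C_{(1,0)}$, all of which lie in $\escript (H)$. The marginal equations give $a=C_{(0,0)}+C_{(0,1)}=c+a_1$ and $b=C_{(0,0)}+C_{(1,0)}=c+b_1$, while $a_1+b_1+c=1-C_{(1,1)}\le 1$ because $C_{(1,1)}\ge 0$. These are precisely the conditions defining coexistence of $a$ and $b$.

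There is no analytic obstacle here; the only thing demanding care is the bookkeeping of which marginal sum equals which effect, so that the single inequality $a_1+b_1+c\le 1$ matches up with both the positivity and the normalization of the corner term $C_{(1,1)}$. Once the dictionary above is fixed, both implications are forced, and the only real work is confirming that the array of four effects is a legitimate observable on the product value-space.
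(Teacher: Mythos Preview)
Your argument is correct and follows essentially the same route as the paper's: set up a $2\times 2$ array $C_{(x,y)}$ with $c$, $a_1$, $b_1$ and the forced corner $1-a_1-b_1-c$, and translate between the marginal equations for a joint observable and the defining conditions for coexisting effects. The only cosmetic differences are that the paper indexes outcomes by $1,2$ rather than $0,1$ and names the fourth corner $d$ rather than writing it out as $1-a_1-b_1-c$.
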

\begin{proof}
Suppose $a,b$ coexist and define $A_1=a$, $A_2=a'$, $B_1=b$, $B_2=b'$. Now there exist $a_1,b_1,c,d\in\escript (H)$ such that
$a_1+b_1+c+d=1$ and $a=a_1+c$, $b=b1+c$. Define the observable, $C_{(i,j)}$, $i,j=1,2$, by $C_{(1,2)}=c$,
$C_{(1,2)}=a_1$, $C_{(2,1)}=b_1$, $C_{(2,2)}=d$. Then
\begin{align*}
A_1&=a_1+c=C_{(1,2)}+C_{(1,2)}\\
A_2&=b_1+d=C_{(2,1)}+C_{(2,2)}\\
B_1&=b_1+c=C_{(1,1)}+C_{(2,1)}\\
B_2&=a_1+d=C_{(1,2)}+C_{(2,2)}
\end{align*}
Thus, the observables $A=\brac{A_1,A_2}$ and $B=\brac{B_1,B_2}$ coexist. Conversely, suppose $A$ and $B$ coexist. There there exists a joint observable $C_{(i,j)}$, $i,j=1,2$, such that $a=C_{(1,1)}+C_{(1,2)}$ and $b=C_{(1,1)}+C_{(2,1)}$. But then
\begin{equation*}
C_{(1,2)}+C_{(2,1)}+C_{(1,1)}\le 1
\end{equation*}
so $a$ and $b$ coexist.
\end{proof}

It is interesting to note that if $A,B\in\oscript (H)$, then $A$ and $(B\mid A)$ always coexist and a joint observable is
$A\circ B$. Indeed, we have that
\begin{align*}
\sum _x(A\circ B)_{(x,y)}&=\sum _x(A_x\circ B_y)=(B\mid A)_y
\intertext{and}
\sum _y(A\circ B)_{(x,y)}&=\sum _y(A_x\circ B_y)=A_x
\end{align*}

One can continue this discussion by saying that $A,B,C\in\oscript (H)$ \textit{coexist} if there exists a joint observable
$D_{(x,y,z)}$ with value-space $\Omega _A\times\Omega _B\times\Omega _C$ such that $A_x=\sum\limits _{y,z}D_{(x,y,z)}$, $B_y=\sum\limits _{x,z}D_{(x,y,z)}$, $C_z=\sum\limits _{x,y}D_{(x,y,z)}$. We then conclude that $A$, $(B\mid A)$ and
$\paren{(C\mid B)\mid A}$ coexist with joint observable
\begin{equation*}
D_{(x,y,z)}=A_x\circ (B_y\circ C_z)
\end{equation*}
Indeed,
\begin{align*}
\sum _{y,z}D_{(x,y,z)}&=\sum _{y,z}A_x\circ (B_y\circ C_z)=A_x\\
\sum _{x,z}D_{(x,y,z)}&=\sum _{x,z}A_x\circ (B_y\circ C_z)=\sum _x(A_x\circ B_y)=(B\mid A)_y\\
\sum _{x,y}D_{(x,y,z)}&=\sum _{z,y}A_x\circ (B_y\circ C_z)=\sum _xA_z\circ\sqbrac{\sum _y(B\circ C_z)}\\
   &=\sum _x\sqbrac{A_x\circ (C\mid B)_z}=\paren{(C\mid B)\mid A}_z
\end{align*}

Let $A,B\in\oscript (H)$ with $\ab{\Omega _A}=m$, $\ab{\Omega _B}=n$. We say that $A$ and $B$ are \textit{complementary} if
\begin{align*}
(B_y\mid A_x)&=A_x\circ B_y=\tfrac{1}{n}\,A_x\\
\intertext{and}
(A_x\mid B_y)&=B_y\circ A_x=\tfrac{1}{m}\,B_y
\end{align*}
for every $x\in\Omega _A$, $y\in\Omega _B$ \cite{gud120,gud220}. We interpret this as saying that when $A$ has a definite value $x$, then
$B$ is completely random and vice versa. A trivial example is when $A_x=\tfrac{1}{m}\,1$ and
$B_y=\tfrac{1}{n}\,1$ are completely random identity observables. When $A$ and $B$ are complementary we have that
\begin{equation*}
A\circ B=\brac{A_x\circ B_y\colon x\in\Omega _A,y\in\Omega _B}
   =\brac{\tfrac{1}{n}\,A_x,\cdots ,\tfrac{1}{n}\,A_x\colon x\in\Omega _A}
\end{equation*}
where there are $n$ terms $\tfrac{1}{n}\,A_x$ and we have a similar expression for $B\circ A$. Moreover,
\begin{equation*}
(B\mid A)_y=\sum _x(A_x\circ B_y)=\sum _x\tfrac{1}{n}\,A_x=\tfrac{1}{n}\,1
\end{equation*}
and similarly, $(A\mid B)_x=\tfrac{1}{m}\,1$. Thus, $(B\mid A)$ and $(A\mid B)$ are completely random identity observables. Moreover, we have that
\begin{align*}
\pscript _\rho (A_X\hbox{ then }B_Y)&=\sum _{x\in X}\sum _{y\in Y}\rmtr (\rho A_x\circ B_y)
     =\sum _{x\in X}\sum _{y\in Y}\rmtr\paren{\rho\,\tfrac{1}{n}\,A_x}\\
     &=\frac{\ab{Y}}{n}\,\rmtr (\rho A_X)=\frac{\ab{Y}}{n}\,\pscript _\rho (A_X)
\end{align*}
and similarly,
\begin{equation*}
\pscript _\rho (B_Y\hbox{ then }A_Z)=\frac{\ab{X}}{m}\,\pscript _\rho (B_Y)
\end{equation*}

We say that two orthonormal bases $\brac{\phi _i}$, $\brac{\psi _i}$ for $H$ are \textit{mutually unbiased} if
$\ab{\elbows{\phi _i,\psi _j}}^2=\tfrac{1}{n}$ for all $i,j=1,2,\ldots ,n$ \cite{wf89}. Mutually unbiased bases always exist
\cite{hz12,wf89}.

\begin{lem}    
\label{lem12}
Two atomic observables $A=\brac{P_{\phi _i}}$, $B=\brac{P_{\psi _i}}$ on $H$ are complementary if and only if $\brac{\phi _i}$ and $\brac{\psi _i}$ are mutually unbiased.
\end{lem}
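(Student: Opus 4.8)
The plan is to reduce both defining equations of complementarity to the single numerical condition $\ab{\elbows{\phi_x,\psi_y}}^2=1/n$ by directly computing the sequential product of two atoms. First I would observe that since $A$ and $B$ are atomic observables whose effects sum to $1$, the families $\brac{\phi_i}$ and $\brac{\psi_i}$ are orthonormal bases of $H$; in particular $m=n=\dim H$, so the two scalars $1/n$ and $1/m$ appearing in the definition of complementarity coincide.

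The key computation is the sequential product $A_x\circ B_y$. Since $A_x=P_{\phi_x}$ is a projection it is its own positive square root, $A_x^{1/2}=A_x$, so
\begin{equation*}
A_x\circ B_y=P_{\phi_x}P_{\psi_y}P_{\phi_x}=\ab{\elbows{\phi_x,\psi_y}}^2P_{\phi_x},
\end{equation*}
exactly as in the computation of Example~1. Hence $A_x\circ B_y=\ab{\elbows{\phi_x,\psi_y}}^2A_x$, and by the symmetric computation $B_y\circ A_x=\ab{\elbows{\psi_y,\phi_x}}^2B_y$.

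With these formulas in hand the equivalence is immediate. The first complementarity condition $A_x\circ B_y=\tfrac{1}{n}A_x$ becomes $\ab{\elbows{\phi_x,\psi_y}}^2P_{\phi_x}=\tfrac{1}{n}P_{\phi_x}$; since $P_{\phi_x}\neq 0$ this holds if and only if $\ab{\elbows{\phi_x,\psi_y}}^2=\tfrac{1}{n}$. The second condition $B_y\circ A_x=\tfrac{1}{m}B_y$ reduces in the same way to $\ab{\elbows{\psi_y,\phi_x}}^2=\tfrac{1}{m}$, which is the same equation because $m=n$ and $\ab{\elbows{\psi_y,\phi_x}}^2=\ab{\elbows{\phi_x,\psi_y}}^2$. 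Thus both conditions, for all $x$ and $y$, are equivalent to $\ab{\elbows{\phi_x,\psi_y}}^2=1/n$ for all $x,y$, which is precisely the statement that $\brac{\phi_i}$ and $\brac{\psi_i}$ are mutually unbiased.

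I do not expect a serious obstacle here; the only points requiring care are the cancellation of the nonzero projection $P_{\phi_x}$ to pass from an operator identity to a scalar one, and the observation that the two defining equations of complementarity collapse to a single mutual-unbiasedness condition rather than two independent ones. Both rest on the basic facts that atomic observables form orthonormal bases and that the sequential product of two atoms is a scalar multiple of the first atom.
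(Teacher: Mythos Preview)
Your proof is correct and follows essentially the same approach as the paper: compute $A_x\circ B_y=\ab{\elbows{\phi_x,\psi_y}}^2A_x$ and equate scalars. Your version is in fact a bit more careful than the paper's, which only explicitly treats the condition $A_i\circ B_j=\tfrac{1}{n}A_i$ and leaves the symmetric condition and the observation $m=n$ implicit.
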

\begin{proof}
We have that
\begin{equation*}
(B_j\mid A_i)=A_i\circ B_j=\ab{\elbows{\phi _i,\psi _j}}^2A_i
\end{equation*}
for $i,j=1,2,\ldots ,n$. Hence, $(B_j\mid A_i)=\tfrac{1}{n}\,A_i$ if and only if $\ab{\elbows{\phi _i,\psi _j}}^2=\tfrac{1}{n}$ for all $i,j=1,2,\ldots ,n$.
\end{proof}

\section{Finite Instruments}  
An \textit{operation} is an affine completely positive map $\ascript\colon\sscript _p(H)\to\sscript _p(H)\quad$
\cite{bgl95,hz12,nc00}. An operation $\ascript$ is a \textit{channel} if $\ascript (\rho )\in\sscript (H)$ for every
$\rho\in\sscript (H)$. We denote the set of channels on $H$ by $\cscript (H)$. Notice that if $a\in\escript (H)$, then
$\rho\mapsto (\rho\mid a)=a\circ\rho$ is an operation and if $A\in\oscript (H)$, then
$\rho\mapsto (\rho\mid A)=\sum\limits _x(A_x\circ\rho )$ is a channel. For a finite set $\Omega _\iscript$, a
\textit{finite instrument} with value-space $\Omega _\iscript$ is a set of operations
$\iscript =\brac{\iscript _x\colon x\in\Omega _\iscript}$ such that
$\iscripthat =\sum\limits _{x\in\Omega _\iscript}\iscript _x\in\cscript (H)$ \cite{bgl95,hz12,nc00}. Defining $\iscript _X$ for
$X\subseteq\Omega _\iscript$ by $\iscript _X=\sum\limits _{x\in X}\iscript _x$ we see that $X\mapsto\iscript _X$ is an operation-valued measure on $H$. If $A\in\oscript (H)$, we say that an instrument $\iscript$ is $A$-\textit{compatible} if
$\Omega _\iscript =\Omega _A$ and the \textit{probability reproducing condition}
\begin{equation}                
\label{eq21}
\pscript _\rho (A_X)=\rmtr\sqbrac{\iscript _X(\rho )}
\end{equation}
holds for every $\rho\in\sscript (H)$, $X\subseteq\Omega _A$ \cite{gud120,gud220,hz12}. To show that $\iscript$ is
$A$-compatible it is sufficient to show that $\pscript _\rho (A_x)=\rmtr\sqbrac{\iscript _x(\rho )}$ for every $\rho\in\sscript (H)$, $x\in\Omega _A$.

We view an $A$-compatible instrument as an apparatus that can be employed to measure the observable $A$. If $\iscript$ is an instrument, there exists a unique $A\in\oscript (H)$ such that $\iscript$ is $A$-compatible and we write $J(\iscript )=A$ \cite{hz12}. Then by \eqref{eq21} we have
\begin{equation*} 
\rmtr\sqbrac{\iscript _X(\rho )}=\rmtr\sqbrac{\rho J(\iscript )_X}
\end{equation*}
for every $\rho\in\sscript (H)$. We denote the set of instruments on $H$ by $\rmin (H)$. We show later that $J\colon\rmin (H)\to\oscript (H)$ is surjective but not injective. Thus, every $A\in\oscript (H)$ has many $A$-compatible instruments. We now give various examples of instruments.

Let $I(\rho )=\rho$ be the identity channel and $\Omega$ be a finite value-space. An \textit{identity instrument} $\rmid$ on
$\Omega$ has the form $\rmid _x=\lambda _xI$ where $\lambda _x\in\sqbrac{0,1}$,
$\sum\limits _{x\in\Omega}\lambda _x=1$. Thus, $\rmid _x(\rho )=\lambda _x\rho$ for all $\rho\in\sscript _p(H)$. Notice that $J(\rmid )$ is the identity observable $B_y=\lambda _y1$. If $A\in\oscript (H)$ and $\alpha\in\sscript (H)$ we define the \textit{trivial instrument} by $\iscript _x(\rho )=\rmtr (\rho A_x)\alpha$. Then $J(\iscript )=A$ and we conclude that $J$ is surjective.

If $A\in\oscript (H)$, we define the \textit{L\"uders instrument} $\lscript ^A$ by
\begin{equation*} 
\lscript _x^A(\rho )=(\rho\mid A)_x=A_x\circ\rho =A_x^{1/2}\rho A_x^{1/2}
\end{equation*}
for all $\rho\in\sscript _p(H)$ \cite{lud51}. Since $J(\lscript ^A)=A$, we see that $J$ is not injective. Notice that an identity instrument is a simple example of a L\"uders instrument. For another example, let $\brac{S_x\in\lscript (H)\colon x\in\Omega}$ satisfy $\sum S_x^*S_x=1$. Then $\rho\mapsto\sum\limits _{x\in\Omega}S_x\rho S_x^*$ is a channel and
$\kscript _x(\rho )=S_x\rho S_x^*$ gives an instrument called a \textit{Kraus instrument} with \textit{Kraus operators} $S_x$ \cite{kra83}. Notice that a L\'uders instrument $\lscript ^A$ is a Kraus instrument with operators $S_x=A_x^{1/2}$. Since
$\rmtr\sqbrac{\kscript _x(\rho )}=\rmtr (\rho S_x^*S_x)$ we see that $J(\kscript )=A$ where $A\in\oscript (H)$ is given by
$A_x=S_x^*S_x$. We define $K\colon\oscript (H)\to\rmin (H)$ by $K(A)=\lscript ^A$. We see that $K$ is not surjective because there are instruments like the trivial instruments and Kraus instruments that are not L\"uders type. Moreover, if
$\lscript ^A=\lscript ^B$, then when $\dim H=n$ we have that
\begin{equation*} 
\tfrac{1}{n}\,A_x=\lscript _x^A\paren{\tfrac{1}{n}\,1}=\lscript _x^B\paren{\tfrac{1}{n}\,1}=\tfrac{1}{n}\,B_x
\end{equation*}
Hence, $A_x=B_x$ so $K$ is injective.

\begin{exam}{2}  
This example shows that trivial instruments need not be Kraus instruments. Let $\iscript _X(\rho )=\rmtr (\rho A_x)P_\psi$ be a trivial instrument and suppose that $\iscript _x(\rho )=S_x\rho S_x^*$ is also a Kraus instrument. We then have that
\begin{equation*} 
P_\psi =\sum _x\iscript _x(\rho )=\sum _xS_x\rho S_x^*
\end{equation*}
for all $\rho\in\sscript (H)$. Letting $\rho =P_\phi$ we have that
\begin{equation*} 
P_\psi =\sum _x\ket{S_x\phi}\bra{S_x\phi}
\end{equation*}
for all $\phi\in H$ with $\doubleab{\phi}=1$. Let $\eta\in H$ with $\doubleab{\eta}=1$ satisfy $\eta\perp\psi$. Then
\begin{equation*} 
0=\elbows{\eta ,P_\psi\eta}=\sum _x\ab{\elbows{\eta ,S_x\phi}}^2
\end{equation*}
It follows that $\elbows{\eta ,S_x\phi}=0$ and hence, $S_x\phi =c_x\psi$ for $c_x\in\complex$. We conclude that
$S_x^*S_x=\ab{c_x}^2P_\psi$. But $\sum _xS_x^*S_x=1$ which gives a contradiction.\hfill\qedsymbol
\end{exam}

We have seen that if $\rmid _x=\lambda _xI$ is an identity instrument, then $J(\rmid )$ is the identity observable $B_x=\lambda _x1$. Conversely, if $B_x=\lambda _x1$, then
\begin{equation*} 
K(B)_x(\rho )=B_x^{1/2}\rho B_x^{1/2}=\lambda _x\rho =\rmid _x(\rho )
\end{equation*}
so $K(B)$ is an identity instrument. However, there are many other instruments that are $B$-compatible. In fact, we have that
$\iscript$ is $B$-compatible if and only if
\begin{equation}                
\label{eq22}
\iscript _x=\lambda _x\ascript _x
\end{equation}
where $\ascript _x\in\cscript (H)$ for all $x\in\Omega _\iscript$. Indeed, if $\iscript _x$ has the form \eqref{eq22}, then for all
$\rho\in\sscript (H)$ we have that
\begin{equation*} 
\rmtr\sqbrac{\iscript _x(\rho )}=\lambda _x\rmtr\sqbrac{\ascript _x(\rho )}=\lambda _x=\rmtr (\rho B_x)
\end{equation*}
Conversely, if $\rmtr\sqbrac{\iscript _x(\rho )}=\rmtr (\rho B_x)=\lambda _x$, for all $\rho\in\sscript (H)$, then when
$\lambda _x\ne 0$, we obtain
\begin{equation*} 
\rmtr\sqbrac{\tfrac{1}{\lambda _x}\,\iscript _x(\rho )}=1
\end{equation*}
Hence, $\tfrac{1}{\lambda _x}\,\iscript _X\in\cscript (H)$ so $\iscript _x=\lambda _x\ascript _x$ for some
$\ascript _x\in\cscript (H)$.

\begin{thm}    
\label{thm21}
{\rm{(i)}}\enspace $JK(A)=A$ for all $A\in\oscript (H)$.
{\rm{(ii)}}\enspace $KJ(\iscript )=\iscript$ if and only if $\iscript =K(A)$ for some $A\in\oscript (H)$.
\end{thm}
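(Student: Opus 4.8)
The plan is to reduce the entire statement to part (i), which is the only part carrying an actual computation. For (i), recall that $K(A)=\lscript^A$ is the L\"uders instrument with $\lscript_x^A(\rho)=A_x^{1/2}\rho A_x^{1/2}$, a legitimate instrument since $\sum_x A_x^{1/2}A_x^{1/2}=1$. To compute $J(\lscript^A)$ I would use the remark in the text that an instrument is $A$-compatible as soon as $\rmtr[\iscript_x(\rho)]=\rmtr(\rho A_x)$ holds for all states $\rho$ and all singletons $x$. So I evaluate the left side by cyclicity of the trace: $\rmtr[A_x^{1/2}\rho A_x^{1/2}]=\rmtr[\rho A_x^{1/2}A_x^{1/2}]=\rmtr(\rho A_x)$. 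Since this holds for every $\rho$, the instrument $\lscript^A$ is $A$-compatible, and by uniqueness of the compatible observable $JK(A)=J(\lscript^A)=A$.

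For part (ii), both implications are then purely formal. For the ``if'' direction, suppose $\iscript=K(A)$ for some $A\in\oscript(H)$; applying $J$ and using part (i) gives $J(\iscript)=JK(A)=A$, whence $KJ(\iscript)=K(A)=\iscript$. For the ``only if'' direction, suppose $KJ(\iscript)=\iscript$; setting $A=J(\iscript)\in\oscript(H)$ we read off immediately $\iscript=KJ(\iscript)=K(A)$, so $\iscript$ is a L\"uders instrument. In other words, (ii) records that the fixed points of $KJ$ are exactly the instruments in the image of $K$.

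I do not expect any genuine obstacle here. The trace identity in (i) is routine, and once (i) is available the two halves of (ii) are one-line substitutions. The only points requiring care are structural rather than computational: I must invoke the earlier fact that each instrument $\iscript$ has a well-defined compatible observable $J(\iscript)$, so that the composites $JK(A)$ and $KJ(\iscript)$ make sense and so that matching the singleton probabilities in (i) is enough to pin down all of $A$ at once.
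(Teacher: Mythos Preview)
Your proposal is correct and matches the paper's proof essentially line for line: part~(i) is the trace computation showing $\lscript^A$ is $A$-compatible (the paper just cites the earlier remark that $J(\lscript^A)=A$, which rests on the same cyclicity step you spell out), and part~(ii) is handled by the same one-line substitutions in both directions, with $A=J(\iscript)$ for the ``only if'' half.
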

\begin{proof}
(i)\enspace For all $\rho\in\sscript (H)$ and $x\in\Omega _A$ we have that
\begin{equation*}
\rmtr\sqbrac{\rho JK(A)_x}=\rmtr\sqbrac{\rho\paren{J(\lscript ^A)_x}}=\rmtr (\rho A_x)
\end{equation*}
It follows that $JK(A)=A$.
(ii)\enspace If $\iscript =KJ(\iscript )$, then $\iscript =K(A)$ for $A=J(\iscript )$. Conversely, suppose $\iscript =K(A)$ for some $A\in\oscript (H)$. Then by (i) we have that
\begin{equation*}
KJ(\iscript )=KJ\paren{K(A)}=K\paren{JK(A)}=K(A)=\iscript\qedhere
\end{equation*}
\end{proof}

If $\iscript ^{(i)}\in\rmin (H)$, $i=1,2,\ldots ,n$, have the same value-space $\Omega$, $\lambda _i\in\sqbrac{0,1}$, with
$\sum\limits _{i=1}^n\lambda _i=1$, we define the \textit{convex combination}
$\sum\limits _{i=1}^n\lambda _i\iscript ^{(i)}\in\rmin (H)$ by
\begin{equation*} 
\sqbrac{\sum _{i=1}^n\lambda _i\iscript ^{(i)}}_x=\sum _{i=1}^n\lambda _i\iscript _x^{(i)}
\end{equation*}
for all $x\in\Omega$. The next result show that $J$ is affine, while $K$ is not.

\begin{thm}    
\label{thm22}
{\rm{(i)}}\enspace $J\sqbrac{\sum\lambda _i\iscript ^{(i)}}=\sum\lambda _iJ\sqbrac{\iscript ^{(i)}}$.
{\rm{(ii)}}\enspace $K\sqbrac{\sum\lambda _iA^{(i)}}\ne\sum\lambda _iK\sqbrac{A^{(i)}}$, in general.
\end{thm}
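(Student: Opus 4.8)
The plan for (i) is to read off the compatible observable of the convex combination straight from the probability reproducing condition \eqref{eq21}, using only linearity of the trace and the uniqueness of the compatible observable. Writing $\iscript=\sum_i\lambda_i\iscript^{(i)}$, for every $\rho\in\sscript(H)$ and every $x\in\Omega$ I would compute
\begin{align*}
\rmtr\sqbrac{\iscript_x(\rho)} &= \rmtr\sqbrac{\sum_i\lambda_i\iscript_x^{(i)}(\rho)} = \sum_i\lambda_i\rmtr\sqbrac{\iscript_x^{(i)}(\rho)} \\
&= \sum_i\lambda_i\rmtr\sqbrac{\rho\,J(\iscript^{(i)})_x} = \rmtr\sqbrac{\rho\sum_i\lambda_i J(\iscript^{(i)})_x}.
\end{align*}
Since this holds for all $\rho\in\sscript(H)$ and the compatible observable of an instrument is unique, $J(\iscript)_x=\sum_i\lambda_i J(\iscript^{(i)})_x$ for each $x$, which is exactly (i). This part is routine and presents no obstacle.

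For (ii) it suffices to exhibit one counterexample, and the entire point is the nonlinearity of the square-root map that $K$ builds in: $\paren{\sum_i\lambda_i A_x^{(i)}}^{1/2}\ne\sum_i\lambda_i\paren{A_x^{(i)}}^{1/2}$ in general. I would take $H=\complex^2$ with orthonormal basis $\brac{\phi_1,\phi_2}$, put $\psi=\tfrac{1}{\sqrt2}(\phi_1+\phi_2)$, and use the two-outcome atomic observables $A^{(1)}=\brac{P_{\phi_1},1-P_{\phi_1}}$ and $A^{(2)}=\brac{P_\psi,1-P_\psi}$ with $\lambda_1=\lambda_2=\tfrac12$. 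As $P_{\phi_1},P_\psi$ are projections, $K(A^{(1)})_1(\rho)=P_{\phi_1}\rho P_{\phi_1}$ and $K(A^{(2)})_1(\rho)=P_\psi\rho P_\psi$, whereas $K\sqbrac{\tfrac12 A^{(1)}+\tfrac12 A^{(2)}}_1(\rho)=M^{1/2}\rho M^{1/2}$ with $M=\tfrac12(P_{\phi_1}+P_\psi)$, which is not a projection. The conceptual obstacle—and the step I expect to be the crux—is that these two instruments cannot be separated by probabilities at all: by part (i) together with Theorem~\ref{thm21}(i) they have the \emph{same} compatible observable $\tfrac12 A^{(1)}+\tfrac12 A^{(2)}$, so $\rmtr$ of their outputs agree on every state (testing on the maximally mixed state $\tfrac12 1$, for instance, returns $\tfrac14(P_{\phi_1}+P_\psi)$ on both sides and detects nothing). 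The inequality must therefore be witnessed at the level of the unnormalized post-measurement operators, not their traces.

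To finish cleanly while avoiding any explicit square-root computation, I would feed in the pure state $\rho=P_{\phi_1}$. On the convex-combination side, $P_{\phi_1}\rho P_{\phi_1}=P_{\phi_1}$ and $P_\psi\rho P_\psi=\ab{\elbows{\psi,\phi_1}}^2P_\psi=\tfrac12 P_\psi$, so the outcome-$1$ output is $\tfrac12 P_{\phi_1}+\tfrac14 P_\psi$, a sum of two noncollinear positive rank-one terms with positive coefficients, hence a rank-two operator. On the $K$-side the outcome-$1$ output is $M^{1/2}P_{\phi_1}M^{1/2}=\doubleab{M^{1/2}\phi_1}^2P_{M^{1/2}\phi_1}$, which has rank at most one. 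A rank-two operator cannot equal a rank-$\le 1$ operator, so the two instruments differ and $K\sqbrac{\tfrac12 A^{(1)}+\tfrac12 A^{(2)}}\ne\tfrac12 K(A^{(1)})+\tfrac12 K(A^{(2)})$, establishing (ii). The only subtlety worth checking is that the mixture output really is rank two, i.e.\ that the chosen state is not orthogonal to either $\phi_1$ or $\psi$; with $\rho=P_{\phi_1}$ both coefficients are manifestly positive, so the rank argument closes the proof.
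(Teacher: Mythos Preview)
Your argument is correct. Part~(i) is verbatim the paper's computation. For~(ii) the paper also chooses two atomic observables with equal weights $\tfrac12$, but it detects the failure by direct algebraic expansion: it writes $K\paren{\tfrac12 A+\tfrac12 B}_x(\rho)=\tfrac12(A_x+B_x)\rho(A_x+B_x)$ (tacitly taking $A_x\perp B_x$ so that $A_x+B_x$ is a projection and the square root is immediate) and then points to the cross terms $A_x\rho B_x+B_x\rho A_x$ absent from $\tfrac12(A_x\rho A_x+B_x\rho B_x)$. Your rank comparison on a pure input state reaches the same conclusion without ever computing $M^{1/2}$, which is tidier and sidesteps the implicit orthogonality assumption in the paper's expansion; otherwise the two approaches are the same in spirit.
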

\begin{proof}
(i)\enspace For all $\rho\in\sscript (H)$ and $x\in\Omega$ we have that
\begin{align*}
\rmtr\sqbrac{\rho J\paren{\sum\lambda _i\iscript ^{(i)}}_x}&=\rmtr\sqbrac{\sum\lambda _i\iscript _x^{(i)}(\rho )}
  =\sum\lambda _i\rmtr\sqbrac{\iscript _x^{(i)}(\rho )}\\
  &=\sum\lambda _i\rmtr\sqbrac{\rho J(\iscript ^{(i)})_x}=\rmtr\sqbrac{\rho\sum\lambda _iJ(\iscript ^{(i)})_x}
\end{align*}
The result now follows.
(ii)\enspace For a counterexample, let $A,B\in\oscript (H)$ be atomic and let $\rho\in\sscript (H)$. In general, we have that
\begin{align*}
K\paren{\tfrac{1}{2}\,A+\tfrac{1}{2}\,B}_x(\rho )&=\lscript _x^{\frac{1}{2}\,A+\frac{1}{2}\,B}(\rho )
    =\tfrac{1}{2}(A_x+B_x)\rho (A_x+B_x)\\
    &=\tfrac{1}{2}(A_x\rho A_x+B_x\rho B_x+A_x\rho B_x+B_x\rho A_x)\\
    &\ne\tfrac{1}{2}(A_x\rho A_x+B_x\rho B_x)=\tfrac{1}{2}\,\lscript _x^A(\rho )+\tfrac{1}{2}\,\lscript _x^B(\rho )\\
    &=\tfrac{1}{2}\,K(A)_x(\rho )+\tfrac{1}{2}\,K(B)_x(\rho )
\end{align*}
so $K\paren{\tfrac{1}{2}\,A+\tfrac{1}{2}\,B}\ne\tfrac{1}{2}\,K(A)+\tfrac{1}{2}\,K(B)$.
\end{proof}

If $\nu =\sqbrac{\nu _{xy}}$ is a stochastic matrix and $\iscript\in\rmin (H)$, we define the \textit{post-processing} $\nu\tbullet\iscript\in\rmin (H)$ of $\iscript$ by
\begin{equation*} 
(\nu\tbullet\iscript )_y=\sum _{x\in\Omega _\iscript}\nu _{xy}\iscript _x
\end{equation*}
Notice that $\nu\tbullet\iscript\in\rmin (H)$ because
\begin{equation*}
(\nu\tbullet\iscript )_{\Omega _\iscript}=\sum _{x,y\in\Omega _\iscript}\nu _{xy}\iscript _x
   =\sum _{x\in\Omega _\iscript}\iscript _x\in\cscript (H)
\end{equation*}

\begin{thm}    
\label{thm23}
{\rm{(i)}}\enspace $J(\nu\tbullet\iscript )=\nu\tbullet J(\iscript )$.
{\rm{(ii)}}\enspace $K(\nu\tbullet A)\ne\nu\tbullet K(A)$, in general.
{\rm{(iii)}}\enspace $\nu\tbullet\sqbrac{\sum\lambda _i\iscript ^{(i)}}=\sum\lambda _i\nu\tbullet\iscript ^{(i)}$.
\end{thm}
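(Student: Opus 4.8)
The three parts are of mixed character: (i) and (iii) are identities that unwind from the definitions by reordering finite sums, whereas (ii) is a negative statement demanding an explicit counterexample. I would treat (i) and (iii) first, since they are routine, and reserve most of the effort for (ii).

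For part (i) the plan is to exploit the trace characterization of $J$ established earlier, namely $\rmtr\sqbrac{\iscript_X(\rho)}=\rmtr\sqbrac{\rho J(\iscript)_X}$ for all $\rho\in\sscript(H)$. Recalling that it suffices to verify the probability reproducing condition outcome-by-outcome, I would fix $\rho\in\sscript(H)$ and a single $y\in\Omega_\iscript$, and compute, using the definition $(\nu\tbullet\iscript)_y=\sum_x\nu_{xy}\iscript_x$ together with linearity of the trace, the chain $\rmtr\sqbrac{(\nu\tbullet\iscript)_y(\rho)}=\sum_x\nu_{xy}\rmtr\sqbrac{\iscript_x(\rho)}=\sum_x\nu_{xy}\rmtr\sqbrac{\rho J(\iscript)_x}=\rmtr\sqbrac{\rho\,(\nu\tbullet J(\iscript))_y}$. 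Since this holds for every state $\rho$ and every $y$, the uniqueness of the compatible observable forces $J(\nu\tbullet\iscript)=\nu\tbullet J(\iscript)$. Part (iii) is even more direct: at a fixed outcome $y$ I would apply the definition of post-processing and then of convex combination to get $\paren{\nu\tbullet\sqbrac{\sum_i\lambda_i\iscript^{(i)}}}_y=\sum_x\nu_{xy}\sum_i\lambda_i\iscript_x^{(i)}$, and the entire content is the interchange of these two finite sums, yielding $\sum_i\lambda_i\sum_x\nu_{xy}\iscript_x^{(i)}=\sum_i\lambda_i(\nu\tbullet\iscript^{(i)})_y$.

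For part (ii) the plan is to reuse the mechanism behind Theorem~\ref{thm22}(ii): the L\"uders construction $K$ fails to commute with any operation that mixes the effects, because taking operator square roots does not distribute over sums. I would take $A$ atomic with $A_x=P_{\phi_x}$, so that $A_x^{1/2}=P_{\phi_x}$, and compare both instruments at a fixed outcome $y$. On the post-processing-then-$K$ side, $\paren{\nu\tbullet K(A)}_y(\rho)=\sum_x\nu_{xy}P_{\phi_x}\rho P_{\phi_x}$ carries no cross terms. On the $K$-of-post-processing side, $(\nu\tbullet A)_y=\sum_x\nu_{xy}P_{\phi_x}$ is diagonal in the orthonormal family $\brac{\phi_x}$, so $(\nu\tbullet A)_y^{1/2}=\sum_x\nu_{xy}^{1/2}P_{\phi_x}$, and hence $K(\nu\tbullet A)_y(\rho)=\sum_{x,x'}\nu_{xy}^{1/2}\nu_{x'y}^{1/2}P_{\phi_x}\rho P_{\phi_{x'}}$ contains off-diagonal terms with $x\ne x'$ that are generally nonzero.

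The main obstacle is organizational rather than conceptual: I must exhibit an explicit small instance in which a surviving cross term actually makes the two operations disagree. I would take $\dim H=2$ with $\brac{\phi_1,\phi_2}$ an orthonormal basis, choose a stochastic matrix $\nu$ having two nonzero entries $\nu_{1y},\nu_{2y}>0$ in some target column $y$, and pick a state $\rho$ whose off-diagonal entry $\elbows{\phi_1,\rho\phi_2}$ is nonzero; then the term $\nu_{1y}^{1/2}\nu_{2y}^{1/2}P_{\phi_1}\rho P_{\phi_2}$ appears on the $K(\nu\tbullet A)$ side but not on the $\nu\tbullet K(A)$ side, so the two instruments differ, establishing the claimed inequality in general.
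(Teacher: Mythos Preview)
Your proposal is correct and follows essentially the same approach as the paper: parts (i) and (iii) are handled via the trace characterization of $J$ and a routine interchange of finite sums, and part (ii) via an atomic $A$ so that $(\nu\tbullet A)_y^{1/2}=\sum_x\nu_{xy}^{1/2}P_{\phi_x}$, producing cross terms absent from $\nu\tbullet K(A)$. Your treatment of (ii) is in fact slightly more complete than the paper's, since you exhibit an explicit two-dimensional state with a nonzero off-diagonal entry to witness the discrepancy, whereas the paper stops at the formal ``$\ne$'' between the two expressions.
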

\begin{proof}
(i)\enspace For every $\rho\in\sscript (H)$ and applicable $y$ we have that
\begin{align*}
\rmtr\sqbrac{\rho J(\nu\tbullet\iscript )_y}&=\rmtr\sqbrac{\rho J\paren{\sum _x\nu _{xy}\iscript _x}}
    =\rmtr\sqbrac{\rho\sum _x\nu _{xy}J(\iscript )_x}\\
    &=\rmtr\sqbrac{\rho\paren{\nu\tbullet J(\iscript )}_y}
\end{align*}
and the result follows.
(ii)\enspace For a counterexample, let $A\in\oscript (H)$ be atomic and let $\rho\in\sscript (H)$. In general, we obtain
\begin{align*}
K(\nu\tbullet A)_y(\rho )&=\lscript _y^{\nu\tbullet A}(\rho )=(\nu\tbullet A)_y^{1/2}\rho (\nu\tbullet A)^{1/2}\\
   &=\paren{\sum _x\nu _{xy}A_x}^{1/2}\rho\paren{\sum _{x'}\nu _{x'y}A_{x'}}^{1/2}\\
   &=\paren{\sum _x\nu _{xy}^{1/2}A_x}\rho\paren{\sum _{x'}\nu _{x'y}^{1/2}A_{x'}}\\
   &=\sum _{x,x'}\nu _{xy}^{1/2}\nu _{x'y}^{1/2}A_x\rho A_{x'}\\
   &\ne\sum _x\nu _{xy}A_x\rho A_x
\end{align*}
(iii)\enspace For every applicable $y$ we obtain
\begin{align*}
\sqbrac{\nu\tbullet\paren{\sum _i\lambda _i\iscript ^{(i)}}}_y&=\sum _x\nu _{xy}\paren{\sum _i\lambda _i\iscript ^{(i)}}_x
   =\sum _x\nu _{xy}\sum _i\lambda _i\iscript _x^{(i)}\\
   &=\sum _i\lambda _i\sum _x\nu _{xy}\iscript _x^{(i)}=\sum _i\lambda _i(\nu\tbullet\iscript ^{(i)})_y
\end{align*}
The result now follows.
\end{proof}

We say that $\iscript,\jscript\in\rmin (H)$ are \textit{complementary} if
\begin{equation*}
\rmtr\sqbrac{\jscript _y(J(\iscript )_x\circ\rho )}=\tfrac{1}{n}\,\rmtr\sqbrac{\iscript _x(\rho )}
\end{equation*}
for every $x\in\Omega _\iscript$, $y\in\Omega _\jscript$, $\rho\in\sscript (H)$ where $n=\ab{\Omega _\jscript}$ and
\begin{equation*}
\rmtr\sqbrac{\iscript _x(J(\jscript )_y\circ\rho )}=\tfrac{1}{m}\,\rmtr\sqbrac{\jscript _y(\rho )}
\end{equation*}
for every $x\in\Omega _\iscript$, $y\in\Omega _\jscript$, $\rho\in\sscript (H)$ where $m=\ab{\Omega _\iscript}$. As with observables, this says that when $\iscript$ has a definite value $x$, then $\jscript$ is completely random and vice versa.

\begin{lem}    
\label{lem24}
$\iscript$ and $\jscript$ are complementary if and only if $J(\iscript )$ and $J(\jscript )$ are complementary.
\end{lem}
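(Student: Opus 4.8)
The plan is to set $A=J(\iscript)$ and $B=J(\jscript)$, so that $\ab{\Omega_A}=\ab{\Omega_\iscript}=m$ and $\ab{\Omega_B}=\ab{\Omega_\jscript}=n$, and to show that each of the two defining conditions for complementarity of $\iscript,\jscript$ is equivalent to the corresponding defining condition for complementarity of $A,B$. The bridge between the instrument picture and the observable picture is the probability reproducing condition $\rmtr\sqbrac{\iscript_x(\rho)}=\rmtr(\rho A_x)$ and $\rmtr\sqbrac{\jscript_y(\rho)}=\rmtr(\rho B_y)$, which by definition hold for all $\rho\in\sscript(H)$.

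First I would rewrite the left side of the first instrument condition. Using the reproducing condition for $\jscript$ and cyclicity of the trace,
\begin{equation*}
\rmtr\sqbrac{\jscript_y(A_x\circ\rho)}=\rmtr\sqbrac{(A_x\circ\rho)B_y}
   =\rmtr\paren{A_x^{1/2}\rho A_x^{1/2}B_y}=\rmtr\sqbrac{\rho(A_x\circ B_y)},
\end{equation*}
since $A_x\circ B_y=A_x^{1/2}B_yA_x^{1/2}$. The right side is $\tfrac1n\,\rmtr\sqbrac{\iscript_x(\rho)}=\tfrac1n\,\rmtr(\rho A_x)$. Hence the first instrument condition reads $\rmtr\sqbrac{\rho(A_x\circ B_y)}=\tfrac1n\,\rmtr(\rho A_x)$ for all $\rho\in\sscript(H)$. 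Taking $\rho=P_\phi$ over unit vectors $\phi$ and using that a self-adjoint operator is determined by the numbers $\elbows{\phi,(\cdot)\,\phi}$, this is equivalent to the operator identity $A_x\circ B_y=\tfrac1n\,A_x$, which is exactly the first complementarity relation for $A$ and $B$. A symmetric computation, with the roles of $\iscript,A$ and $\jscript,B$ interchanged, shows that the second instrument condition is equivalent to $B_y\circ A_x=\tfrac1m\,B_y$, the second complementarity relation. Since every step is reversible, reading the chain in both directions yields the lemma.

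The step requiring care is that the reproducing condition $\rmtr\sqbrac{\jscript_y(\sigma)}=\rmtr(\sigma B_y)$ is stated for states $\sigma\in\sscript(H)$, whereas I apply it to $\sigma=A_x\circ\rho$, which is in general only a partial state with $\rmtr(\sigma)=\rmtr(\rho A_x)=t\le 1$. To justify the extension I would use that an operation is positively homogeneous with $\jscript_y(0)=0$ (operations being restrictions of linear completely positive maps): writing $\sigma=t\,(\sigma/t)+(1-t)\,0$ with $\sigma/t\in\sscript(H)$ when $t>0$, affineness gives $\rmtr\sqbrac{\jscript_y(\sigma)}=t\,\rmtr\sqbrac{\jscript_y(\sigma/t)}=t\,\rmtr\sqbrac{(\sigma/t)B_y}=\rmtr(\sigma B_y)$, while $t=0$ forces $\sigma=0$ and both sides vanish. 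I expect this passage from states to partial states to be the only genuine obstacle; once it is secured, the remainder is the trace bookkeeping displayed above together with the standard fact that agreement of expectation values on all states forces equality of self-adjoint operators.
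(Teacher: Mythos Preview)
Your proof is correct and follows essentially the same route as the paper: apply the reproducing condition to convert $\rmtr\sqbrac{\jscript_y(A_x\circ\rho)}$ into $\rmtr\sqbrac{(A_x\circ\rho)B_y}$, use trace cyclicity to obtain $\rmtr\sqbrac{\rho(A_x\circ B_y)}$, and then conclude the operator identity from equality of expectations on all states. Your extra paragraph justifying the extension of the reproducing condition from states to partial states is a point the paper passes over silently, so your argument is in fact slightly more complete.
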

\begin{proof}
The following statements are equivalent:
\begin{align*}
\rmtr\sqbrac{\jscript _y\paren{J(\iscript )_x\circ\rho}}&=\tfrac{1}{n}\,\rmtr\sqbrac{\iscript _x(\rho )}\\
\rmtr\sqbrac{J(\jscript )_yJ(\iscript )_x\circ\rho}&=\tfrac{1}{n}\,\rmtr\sqbrac{\iscript _X(\rho )}\\
\rmtr\sqbrac{\rho J(\iscript )_x^{1/2}J(\jscript )_yJ(\iscript )_x^{1/2}}&=\tfrac{1}{n}\,\rmtr\sqbrac{\rho J(\iscript )_x}\\
\rmtr\sqbrac{\rho J(\iscript )_x\circ J(\jscript )_y}&=\tfrac{1}{n}\,\rmtr\sqbrac{\rho J(\iscript )_x}
\end{align*}
This is equivalent to
\begin{equation*}
\paren{J(\jscript )_y\mid J(\iscript )_x}=J(\iscript )_x\circ J(\jscript )_y=\tfrac{1}{n}\,J(\iscript )_x
\end{equation*}
for all $x\in\Omega _\iscript$, $y\in\Omega _\jscript$. A similar expression holds with $\iscript$ and $\jscript$ interchanged so the result holds.
\end{proof}

\begin{cor}    
\label{cor25}
If $K(A)$ and $K(B)$ are complementary, then $A$ and $B$ are complementary.
\end{cor}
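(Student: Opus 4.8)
The plan is to treat this as an immediate consequence of the two results just established, namely Lemma~\ref{lem24} and Theorem~\ref{thm21}(i). The corollary concerns instruments of the special form $K(A)=\lscript ^A$, and the whole point is that the equivalence in Lemma~\ref{lem24} converts a statement about complementarity of instruments into one about complementarity of the observables they measure. Since Theorem~\ref{thm21}(i) identifies those measured observables explicitly, the conclusion should drop out with no genuine computation.

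First I would specialize Lemma~\ref{lem24} to the instruments $\iscript =K(A)$ and $\jscript =K(B)$. The lemma states that $\iscript$ and $\jscript$ are complementary if and only if $J(\iscript )$ and $J(\jscript )$ are complementary; applying this with our choice shows that $K(A)$ and $K(B)$ are complementary if and only if $J\sqbrac{K(A)}$ and $J\sqbrac{K(B)}$ are complementary. The hypothesis of the corollary is precisely that $K(A)$ and $K(B)$ are complementary, so I may conclude that $J\sqbrac{K(A)}$ and $J\sqbrac{K(B)}$ are complementary observables.

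Second, I would invoke Theorem~\ref{thm21}(i), which gives $JK(A)=A$ and $JK(B)=B$ for all $A,B\in\oscript (H)$. Substituting these identities into the statement just obtained yields that $A$ and $B$ are complementary, which is the desired conclusion. There is essentially no obstacle here: the substance of the argument has already been carried out in proving Lemma~\ref{lem24} and Theorem~\ref{thm21}, and the corollary only reassembles them.

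One remark worth including is that the argument in fact establishes more than the stated implication. Because Lemma~\ref{lem24} is an \emph{if and only if}, the chain of reasoning above is reversible, so $K(A)$ and $K(B)$ are complementary precisely when $A$ and $B$ are complementary. The corollary records only the forward direction, but the same two ingredients give the equivalence should it be needed later.
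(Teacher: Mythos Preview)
Your proof is correct and matches the paper's own argument essentially line for line: both simply combine Lemma~\ref{lem24} with Theorem~\ref{thm21}(i) to replace $J\sqbrac{K(A)}$ and $J\sqbrac{K(B)}$ by $A$ and $B$.

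Your closing remark deserves a comment. You observe, correctly, that since Lemma~\ref{lem24} is stated and proved as an equivalence, the argument reverses and yields that $K(A)$ and $K(B)$ are complementary \emph{if and only if} $A$ and $B$ are. The paper, however, immediately after Corollary~\ref{cor25}, writes ``We conjecture that the converse of Corollary~\ref{cor25} does not hold.'' Your observation shows that this conjecture is inconsistent with the paper's own Lemma~\ref{lem24}: given that lemma as an if-and-only-if, the converse of the corollary follows at once. So either the conjecture is simply mistaken, or the author had some different (unstated) notion in mind.
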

\begin{proof}
By Lemma~\ref{lem24} and Theorem~\ref{thm21}(i) we have that if $K(A)$ and $K(B)$ are complementary then $A=JK(A)$ and $B=JK(B)$ are complementary.
\end{proof}

We conjecture that the converse of Corollary~\ref{cor25} does not hold.

We say that $\iscript ,\jscript\in\rmin (H)$ \textit{coexist} if there exists a $\kscript\in\rmin (H)$ with value-space
$\Omega _\iscript\times\Omega _\jscript$ such that
\begin{equation}                
\label{eq23}
\iscript _x=\sum _{y\in\Omega _\jscript}\kscript _{(x,y)},\qquad\jscript _y=\sum _{x\in\Omega _\iscript}\kscript _{(x,y)}
\end{equation}

\begin{lem}    
\label{lem26}
{\rm{(i)}}\enspace If $\iscript$ and $\jscript$ coexist, then $J(\iscript )$ and $J(\jscript )$ coexist.
{\rm{(ii)}}\enspace If $K(A)$ and $K(B)$ coexist, then $A$ and $B$ coexist.
\end{lem}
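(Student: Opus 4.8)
The plan is to settle part (i) directly by exhibiting a joint observable built from the witnessing instrument, and then to obtain part (ii) as an immediate consequence of (i) and Theorem~\ref{thm21}(i), in exact analogy with the way Corollary~\ref{cor25} was derived from Lemma~\ref{lem24}.

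For part (i), suppose $\iscript$ and $\jscript$ coexist, so that there is a $\kscript\in\rmin (H)$ with value-space $\Omega _\iscript\times\Omega _\jscript$ satisfying \eqref{eq23}. The natural candidate for a joint observable of $J(\iscript )$ and $J(\jscript )$ is $C_{(x,y)}=J(\kscript )_{(x,y)}$; this is automatically an element of $\oscript (H)$ with the correct value-space $\Omega _\iscript\times\Omega _\jscript$, since $\kscript$ is an instrument on that set and $J$ maps $\rmin (H)$ into $\oscript (H)$. I would then verify the two marginal conditions. For the first, I compute, for every $\rho\in\sscript (H)$,
\[
\rmtr\sqbrac{\rho J(\iscript )_x}=\rmtr\sqbrac{\iscript _x(\rho )}=\sum _{y}\rmtr\sqbrac{\kscript _{(x,y)}(\rho )}=\sum _{y}\rmtr\sqbrac{\rho J(\kscript )_{(x,y)}}=\rmtr\sqbrac{\rho\sum _{y}C_{(x,y)}},
\]
where the outer equalities are the probability reproducing condition for $\iscript$ and for $\kscript$, and the middle step uses $\iscript _x=\sum _y\kscript _{(x,y)}$ together with linearity of the trace.

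Since the resulting identity holds for every state $\rho$ and both operators are self-adjoint effects, it forces $J(\iscript )_x=\sum _yC_{(x,y)}$. The second marginal $J(\jscript )_y=\sum _xC_{(x,y)}$ is obtained identically, using $\jscript _y=\sum _x\kscript _{(x,y)}$ in place of the first relation in \eqref{eq23}. This exhibits $C_{(x,y)}$ as a joint observable, so $J(\iscript )$ and $J(\jscript )$ coexist.

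Part (ii) then requires no further computation: applying part (i) to the instruments $K(A)$ and $K(B)$ shows that if these coexist, then $JK(A)$ and $JK(B)$ coexist, and by Theorem~\ref{thm21}(i) these observables are exactly $A$ and $B$. I do not expect a serious obstacle anywhere: the only nonroutine point is the passage from equality of traces against all states to equality of operators, namely the standard fact that $\rmtr (\rho S)=\rmtr (\rho T)$ for all $\rho\in\sscript (H)$ implies $S=T$ when $S,T$ are self-adjoint (take $\rho =P_\phi$ and polarize). This device is already used implicitly throughout Section~2, so the argument stays entirely within the tools established earlier.
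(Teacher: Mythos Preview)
Your proposal is correct and follows essentially the same approach as the paper: define $C_{(x,y)}=J(\kscript)_{(x,y)}$, verify the marginal identities via the probability reproducing condition and \eqref{eq23}, and deduce (ii) from (i) together with Theorem~\ref{thm21}(i). The paper's argument is line-for-line the same, so there is nothing to add.
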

\begin{proof}
(i)\enspace Since $\iscript$ and $\jscript$ coexist, there exists a $\kscript\in\rmin(H)$ satisfying \eqref{eq23}. Define the observable $C_{(x,y)}\in\oscript (H)$ with value-space $\Omega _\iscript\times\Omega _\jscript$ defined by
$C_{(x,y)}=J(\kscript _{(x,y)})$. For all $x\in\Omega _\iscript$, $\rho\in\sscript (H)$ we have that
\begin{align*}
\rmtr\sqbrac{\rho J(\iscript )_x}&=\rmtr\sqbrac{\iscript _x(\rho )}
    =\rmtr\sqbrac{\sum _{y\in\Omega _\jscript}\kscript _{(x,y)}(\rho )}\\
    &=\sum _{y\in\Omega _\jscript}\rmtr\sqbrac{\kscript _{(x,y)}(\rho )}
    =\sum _{y\in\Omega _\jscript}\rmtr\sqbrac{\rho J(\kscript _{(x,y)})}\\
    &=\rmtr\sqbrac{\rho\sum _{y\in\Omega _\jscript}J(\kscript _{(x,y)})}
    =\rmtr\sqbrac{\rho\sum _{y\in\Omega _\jscript}C_{(x,y)}}
\end{align*}

It follows that $J(\iscript )_x=\sum\limits _{y\in\Omega _\jscript}C_{(x,y)}$ and in a similar way,
$J(\jscript )_y=\sum\limits _{x\in\Omega _\iscript}C_{(x,y)}$. We conclude that $C_{(x,y)}$ is a joint observable for $J(\iscript )$ and $J(\jscript )$ so $J(\iscript )$ and $J(\jscript )$ coexist.
(ii)\enspace If $K(A)$ and $K(B)$ coexist then by Theorem~\ref{thm21}(i) we have that $A=JK(A)$ and $B=JK(B)$ coexist.
\end{proof}

Notice that if $\iscript ,\jscript\in\rmin (H)$ coexist, then $\iscripthat =\jscripthat$. The converse of this statement does not hold. Indeed, let $\iscript _x(\rho )=\rmtr (\rho A_x)\alpha$ and $\jscript _y(\rho )=\rmtr (\beta B_y)\alpha$ be trivial instruments. Then $\iscripthat =\jscripthat$ but if $A$ and $B$ do not coexist, then $\iscript$ and $\jscript$ do not exist. Also the converse of Theorem~\ref{lem26}(i) does not hold. Indeed, let $A,B\in\oscript (H)$ that coexist but for which $\lscript ^A\ne\lscript ^B$. Then $\lscript ^A$ and $\lscript ^B$ do not coexist. Hence, $J(\lscript ^A)$ and $J(\lscript ^B)$ coexist but $\lscript ^A$ and
$\lscript ^B$ do not. This also shows that the converse of Theorem~\ref{lem26}(ii) does not hold.

\section{Products of Instruments}  
For $\iscript ,\jscript\in\rmin (H)$, we define the \textit{product instrument} $\iscript\circ\jscript =\rmin (H)$ with value-space
$\Omega _\iscript\times\Omega _\jscript$ by
\begin{equation*}
(\iscript\circ\jscript )_{(x,y)}(\rho )=\jscript _y\sqbrac{\iscript _x(\rho )}
\end{equation*}
for all $\rho\in\sscript (H)$. We also define the \textit{conditioned instrument} with value-space $\Omega _\jscript$ by
\begin{equation*}
(\jscript\mid\iscript )_y=\sum _{x\in\Omega _\iscript}(\iscript\circ\jscript )_{(x,y)}
\end{equation*}
We then obtain
\begin{equation*}
(\jscript\mid\iscript )_y(\rho )=\sum _{x\in\Omega _\iscript}\jscript _y\sqbrac{\iscript _x(\rho )}
   =\jscript _y\sqbrac{\sum _{x\in\Omega _\iscript}\iscript _x(\rho )}=\jscript _y\sqbrac{\iscripthat (\rho )}
\end{equation*}

\begin{exam}{3}  
Let $\iscript _x(\rho )=S_x\rho S_x^*$, $\jscript _y (\rho )=T_y\rho T_y^*$ be Kraus instruments where
$\sum\limits _{x\in\Omega _\iscript}S_x^*S_x=\sum\limits _{y\in\Omega _\jscript}T_y^*T_y=1$. Then
\begin{equation*}
(\iscript\circ\jscript )_{(x,y)}(\rho )=T_yS_x\rho S_x^*T_y^*=T_yS_x\rho (T_yS_x)^*
\end{equation*}
We conclude that the product of two Kraus instruments is a Kraus instrument and its Kraus operators are $T_yS_x$. As we have seen, $J(\iscript )_x=S_x^*S_x$ and $J(\jscript )_y=T_y^*T_y$. Since
\begin{equation*}
\rmtr\sqbrac{(\iscript\circ\jscript )_{(x,y)}(\rho )}=\rmtr (T_yS_x\rho S_x^*T_y^*)=\rmtr (\rho S_x^*T_y^*T_yS_x)
\end{equation*}
we conclude that
\begin{equation*}
J(\iscript\circ\jscript )_{(x,y)}=S_x^*T_y^*T_yS_x
\end{equation*}
Now
\begin{align*}
\sqbrac{J(\iscript )\rho J(\jscript )}_{(x,y)}&=J(\iscript )_x\circ J(\jscript )_y=(S_x^*S_x)\circ (T_y^*T_y)\\
    &=(S_x^*S_x)^{1/2}T_y^*T_y(S_x^*S_x)^{1/2}
\end{align*}
which does not equal $S_x^*T_y^*T_yS_x$, in general. For example, let $S_x=\tfrac{1}{\sqrt{n\,}}U_x$ where $n=\dim H$ and $U_x$ is a unitary operator satisfying $U_xT_y^*T_y\ne T_y^*T_yU_x$. Then
\begin{equation*}
J(\iscript\circ\jscript )_{(x,y)}=\tfrac{1}{n}\,U_x^*T_y^*T_yU_x\ne\tfrac{1}{n}\,T_y^*T_y
   =\sqbrac{J(\iscript )\circ J(\jscript )}_{(x,y)}
\end{equation*}
Hence, $J(\iscript\circ\jscript )\ne J(\iscript )\circ J(\jscript )$, in general.

\hskip 2pc Moreover, we have that  
\begin{equation*}
(\jscript\mid\iscript )_y(\rho )=\sum _{x\in\Omega _\iscript}(\iscript\circ\jscript )_{(x,y)}(\rho )
    =T_y\sum _{x\in\Omega _\iscript}S_x\rho S_x^*T_y^*
\end{equation*}
Hence,
\begin{equation*}
\rmtr\sqbrac{(\jscript\mid\iscript )_y(\rho )}=\rmtr\sqbrac{\rho\sum _{x\in\Omega _\iscript}S_x^*T_y^*T_yS_x}
\end{equation*}
and it follows that
\begin{equation*}
J(\jscript\mid\iscript )_y=\sum _{x\in\Omega _\iscript}S_x^*T_y^*T_yS_x
\end{equation*}
As before,
\begin{align*}
\paren{J(\jscript)\mid J(\iscript )}_y&=\sum _{x\in\Omega _\iscript}\sqbrac{J(\iscript )_x\circ J(\jscript )_y}
     =\sum _{x\in\Omega _\iscript}(S_x^*S_x)\circ (T_y^*T_y)\\
     &\ne J(\jscript\mid\iscript )_y
\end{align*}
so $J(\jscript\mid\iscript )\ne\paren{J(\jscript )\mid J(\iscript )}$, in general.\hfill\qedsymbol
\end{exam}

\begin{exam}{4}  
For $A,B\in\oscript (H)$, let $\iscript =\lscript ^A$, $\jscript =\lscript ^B$ be their corresponding L\"uders instruments. Then $\iscript =K(A)$, $\jscript =K(B)$ and we have that
\begin{equation*}
(\iscript\circ\jscript )_{(x,y)}(\rho )=B_y\circ (A_x\circ\rho )
\end{equation*}
We conclude that
\begin{equation*}
\rmtr\sqbrac{(\iscript\circ\jscript )_{(x,y)}(\rho )}=\rmtr\sqbrac{\rho (A_x\circ B_y)}=\rmtr\sqbrac{\rho (A\circ B)_{(x,y)}}
\end{equation*}
Hence, $\iscript\circ\jscript$ is $A\circ B$ compatible and unlike general Kraus instruments, we have that
\begin{equation*}
J(\iscript\circ\jscript )=A\circ B=J(\iscript )\circ J(\jscript )
\end{equation*}
We also obtain
\begin{equation*}
\lscript _{(x,y)}^{A\circ B}(\rho )=(A\circ B)_{(x,y)}\circ\rho\ne B_y\circ (A_x\circ\rho )=(\iscript\circ\jscript )_{(x,y)}(\rho )
\end{equation*}
in general. Hence $K(A\circ B)\ne K(A)\circ K(B)$, in general. Moreover, it is not hard to show that
$K(A\circ B)=K(A)\circ K(B)$ if and only if $A$ and $B$ commute.

\hskip 2pc The conditioned instrument satisfies    
\begin{equation*}
(\jscript\mid\iscript )_y(\rho )=B_y\circ\sqbrac{\sum _{x\in\Omega _\iscript}(A_x\circ\rho )}
\end{equation*}
which gives
\begin{align*}
\rmtr\sqbrac{(\jscript\mid\iscript )_y(\rho )}&=\sum _{x\in\Omega _\iscript}\rmtr\sqbrac{B_y\circ (A_x\circ\rho )}
    =\sum _{x\in\Omega _\jscript}\rmtr\sqbrac{B_y(A_x\circ\rho )}\\
    &=\sum _{x\in\Omega _\iscript}\rmtr (\rho A_x\circ B_y)=\rmtr\sqbrac{\rho\sum _{x\in\Omega _\iscript}(A_x\circ B_y)}\\
    &=\rmtr\sqbrac{\rho (B\mid A)_y}
\end{align*}
Unlike general Kraus instruments we conclude that $J(\jscript\mid\iscript )=\paren{J(\jscript )\mid J(\iscript )}$.
\hfill\qedsymbol
\end{exam}

\begin{exam}{5}  
We now consider the identity instrument $\rmid _x=\lambda _xI$ where $\lambda _x\in\sqbrac{0,1}$,
$\sum _{x\in\Omega}\lambda _x=1$. If $\jscript\in\oscript (H)$ is arbitrary we have the following:
\begin{align*}
(\rmid\circ\jscript )_{(x,y)}(\rho )&=\jscript _y\sqbrac{\rmid _x(\rho )}=\jscript _y(\lambda _x\rho )
    =\lambda _x\jscript _y(\rho )\\
    (\jscript\circ\rmid )_{(x,y)}(\rho )&=\rmid _x\sqbrac{\jscript _y(\rho )}=\lambda _x\jscript _y(\rho )\\
    (\jscript\mid\rmid )_y(\rho )&=\sum _{x\in\Omega}\jscript _y\sqbrac{\rmid _x(\rho )}
    =\sum _{x\in\Omega}\jscript _y(\lambda _x\rho )=\jscript _y(\rho )\\
    (\rmid\mid\jscript )_x(\rho )&=\sum _{y\in\Omega _\jscript}\rmid _x\sqbrac{\jscript _y(\rho )}
    =\sum _{y\in\Omega _\jscript}\lambda _x\jscript _y(\rho )=\lambda _x\jscripthat (\rho )
\end{align*}
We conclude $(\jscript\mid\rmid )=\jscript$ and $(\rmid\mid\jscript )_x=\lambda _x\jscripthat$.\hfill\qedsymbol
\end{exam}

\begin{exam}{6}  
Let $\iscript _x(\rho )=\rmtr (\rho A_x)\alpha$ and $\jscript _y(\rho )=\rmtr (\rho B_y)\beta$ be trivial instruments. We have that
\begin{align*} 
(\iscript\circ\jscript )_{(x,y)}(\rho )&=\jscript _y\sqbrac{\iscript _x(\rho )}=\jscript _y\sqbrac{\rmtr (\rho A_x)\alpha}
    =\rmtr (\rho A_x)\jscript _y(\alpha )\\
    &=\rmtr (\rho A_x)\rmtr (\alpha B_y)\beta
\end{align*}
Hence,
\begin{align*}
\rmtr\sqbrac{(\iscript\circ\jscript )_{(x,y)}(\rho )}&=\rmtr (\rho A_x)\rmtr (\alpha B_y)\ne\rmtr (\rho A_x\circ B_y)\\
   &=\rmtr\sqbrac{\rho (A\circ B)_{(x,y)}}
\end{align*}
so we conclude that
\begin{equation*}
J(\iscript\circ\jscript )\ne A\circ B=J(\iscript )\circ J(\jscript )
\end{equation*}
in general. Moreover, in general we have that
\begin{align*}
\rmtr\sqbrac{(\jscript\mid\iscript )_y(\rho )}
&=\rmtr\sqbrac{\jscript _y\paren{\sum _{x\in\Omega _\iscript}\iscript _x(\rho )}}
   =\rmtr\sqbrac{\jscript _y(\alpha )}=\rmtr (\alpha B_y)\\
   &\ne\rmtr\sqbrac{\rho\sum _{x\in\Omega _\iscript}(A_x\circ B_y)}=\rmtr\sqbrac{\rho (B\mid A)_y}\\
   &=\rmtr\sqbrac{\rho\paren{J(\jscript )\mid J(\iscript )}_y}
\end{align*}
We conclude that $J(\jscript\mid\iscript )\ne\paren{J(\jscript )\mid J(\iscript )}$, in general.\hfill\qedsymbol
\end{exam}

For $\iscript ,\jscript\in\rmin (H)$ we define the \textit{joint probability}
\begin{equation*}
\pscript _\rho (\iscript _x\hbox{ then }\jscript _y)=\rmtr\sqbrac{\jscript _y\paren{\iscript _x(\rho )}}
\end{equation*}
for all $x\in\Omega _\iscript$, $y\in\Omega _\jscript$. For $X\subseteq\Omega _\iscript$, $Y\subseteq\Omega _\jscript$ we have that
\begin{align*}
\pscript _\rho (\iscript _X\hbox{ then }\jscript _Y)
    &=\sum _{x\in X}\sum _{y\in Y}\pscript _\rho (\iscript _x\hbox{ then }\jscript _y)\\
    &=\sum _{x\in X}\sum _{y\in Y}\rmtr\sqbrac{\jscript _y\paren{\iscript _x(\rho )}}
    =\rmtr\sqbrac{\jscript _Y\paren{\iscript _X(\rho )}}
\end{align*}
and we see this gives a probability measure $\Omega _\iscript\times\Omega _\jscript$.

\begin{exam}{7}  
Let $\iscript _x(\rho )=\rmtr (\rho A_x)\alpha$, $\jscript _y(\rho )=\rmtr (\rho B_y)\beta$ be trivial instruments. We then have that
\begin{align*}
\pscript _\rho (\iscript _x\hbox{ then }\jscript _y)&=\rmtr\sqbrac{\jscript _y\paren{\rmtr (\rho A_x)\alpha}}
   =\rmtr (\rho A_x)\rmtr (\alpha B_y)\\
\intertext{and}
\pscript _\rho (\iscript _X\hbox{ then }\jscript _Y)&=\rmtr (\rho A_X)\rmtr (\alpha B_Y)
\end{align*}
In general,
\begin{align*}
\rmtr (\rho A_x)\rmtr (\alpha B_y)&\ne\rmtr (\rho A_x\circ B_y)=\pscript _\rho (A_x\hbox{ then }B_y)\\
   &=\pscript _\rho\sqbrac{(J\iscript )_x\hbox{ then }(J\jscript )_y}
\end{align*}
We conclude that 
\begin{equation*}
\pscript _\rho (\iscript _x\hbox{ then }\jscript _y)\ne\pscript _\rho\sqbrac{(J\iscript )_x\hbox{ then }(J\jscript )_y}
\end{equation*}
in general.\hfill\qedsymbol
\end{exam}

\begin{lem}    
\label{lem31}
For any $A,B\in\oscript (H)$, $\rho\in\sscript (H)$ we have that
\begin{equation*}
\pscript _\rho\sqbrac{K(A)_X\hbox{ then }K(B)_Y}=\pscript _\rho (A_X\hbox{ then }B_Y)
\end{equation*}
\end{lem}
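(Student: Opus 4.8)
The plan is to unfold both sides into traces of products of effects against $\rho$ and match them term by term; there is no genuine obstacle here, only careful bookkeeping of the square roots and one application of trace cyclicity. First I would expand the left-hand side using the definition of the instrument joint probability, namely
\begin{equation*}
\pscript _\rho\sqbrac{K(A)_X\hbox{ then }K(B)_Y}=\rmtr\sqbrac{K(B)_Y\paren{K(A)_X(\rho )}}.
\end{equation*}
Since $K(A)=\lscript ^A$ and $K(B)=\lscript ^B$ are L\"uders instruments, I substitute the explicit form $\lscript _x^A(\rho )=A_x^{1/2}\rho A_x^{1/2}$, so that $K(A)_X(\rho )=\sum _{x\in X}A_x^{1/2}\rho A_x^{1/2}$, and then apply $K(B)_Y$ to this partial state.

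This yields, after using linearity of the operations and of the trace,
\begin{equation*}
\pscript _\rho\sqbrac{K(A)_X\hbox{ then }K(B)_Y}
   =\sum _{x\in X}\sum _{y\in Y}\rmtr\sqbrac{B_y^{1/2}A_x^{1/2}\rho A_x^{1/2}B_y^{1/2}}.
\end{equation*}
The key step is then a single use of the cyclicity of the trace together with $B_y^{1/2}B_y^{1/2}=B_y$: each summand collapses as
\begin{equation*}
\rmtr\sqbrac{B_y^{1/2}A_x^{1/2}\rho A_x^{1/2}B_y^{1/2}}
   =\rmtr\sqbrac{\rho A_x^{1/2}B_yA_x^{1/2}}=\rmtr\sqbrac{\rho (A_x\circ B_y)}.
\end{equation*}
Here I am using the definition $A_x\circ B_y=A_x^{1/2}B_yA_x^{1/2}$ of the sequential product of effects. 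The only point requiring attention is that the outer $B_y^{1/2}$ factor must be cycled to the front to meet the inner one; everything else is routine.

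Finally I would compare the resulting double sum $\sum _{x\in X}\sum _{y\in Y}\rmtr\sqbrac{\rho (A_x\circ B_y)}$ with the observable joint probability. By the alternative expansion of $\pscript _\rho (A_X\hbox{ then }B_Y)$ recorded earlier in Section~1, we have precisely
\begin{equation*}
\pscript _\rho (A_X\hbox{ then }B_Y)
   =\sum _{x\in X}\sum _{y\in Y}\rmtr (\rho A_x\circ B_y),
\end{equation*}
which coincides with the expression just obtained. Hence the two joint probabilities agree, establishing the lemma. The overall moral is that the L\"uders structure $K(A)_x(\rho )=A_x^{1/2}\rho A_x^{1/2}$ is exactly what makes the nested square-root sandwich reduce, under the trace, to the single sequential-product effect $A_x\circ B_y$, so that measuring $A$ then $B$ through their L\"uders instruments reproduces the observable-level joint probability of $A_X$ then $B_Y$.
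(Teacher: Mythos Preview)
Your proof is correct and follows essentially the same route as the paper's: both unfold $K(A)$ and $K(B)$ as L\"uders instruments, compute $\rmtr\sqbrac{B_y\circ (A_x\circ\rho )}$, and use trace cyclicity to rewrite this as $\rmtr\sqbrac{\rho (A_x\circ B_y)}$. The only cosmetic difference is that the paper verifies the identity for single outcomes $x,y$ and then sums, whereas you carry the sums over $X,Y$ throughout.
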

\begin{proof}
For $x\in\Omega _A$, $y\in\Omega _B$ we have that
\begin{align*}
\pscript _\rho\sqbrac{K(A)_x\hbox{ then }K(B)_y}&=\rmtr\sqbrac{K(B)_y\paren{K(A)_x(\rho )}}
  =\rmtr\sqbrac{B_y\circ (A_x\circ\rho )}\\
  &=\rmtr (\rho A_x\circ B_y)=\pscript _\rho (A_x\hbox{ then }B_y)
\end{align*}
The result now follows.
\end{proof}

\begin{exam}{8}  
If $\iscript =\lscript ^A$ and $\jscript =\lscript ^B$ are L\"uders instruments, we obtain
\begin{align*}
\pscript _\rho (\iscript _x\hbox{ then }\jscript _y)&=\rmtr\sqbrac{\lscript _y^B\paren{\lscript _x^A(\rho )}}=\rmtr\sqbrac{B_y\circ (A_x\circ\rho )}\\
   &=\rmtr\sqbrac{\rho (A_x\circ B_y)}=\pscript _\rho (A_x\hbox{ then }B_y)\\
   &\pscript _\rho\sqbrac{(J\iscript )_x\hbox{ then }(J\jscript )_y}
\end{align*}
Unlike the previous example we conclude that
\begin{equation*}
\pscript _\rho (\iscript _X\hbox{ then }\jscript _Y)=\pscript _\rho\sqbrac{(J\iscript )_X\hbox{ then }(J\iscript )_Y}
\end{equation*}
for all $\rho\in\sscript (H)$.\hfill\qedsymbol
\end{exam}

We close this section with a brief discussion on the instrument channel $\iscripthat =\iscript _{\Omega _\iscript}$ for an
$\iscript\in\rmin (H)$. It is well-known that any channel $\ascript$ has the form $\ascript (\rho )=\sum _xS_x\rho S_x^*$ where $S_x\in\lscript (H)$ and $\sum S_x^*S_x=1$ \cite{bgl95,hz12,kra83,nc00}. Although the operators $S_x$ are not unique, they are determined up to unitary equivalences \cite{hz12, nc00}. Corresponding to the channel
$\ascript (\rho )=\sum S_x\rho S_x^*$, we define the corresponding Kraus instrument
$\iscript _x^\ascript (\rho )=S_x\rho S_x^*$. We then have that $(\iscript ^\ascript )^\wedge =\ascript$. The instrument
$\iscript ^\ascript$ depends on the particular representation of $\ascript$ and is highly nonunique. Thus, there are other instruments
$\jscript\ne\iscript ^\ascript$ such that $\jscripthat =\ascript$. Of course, if $\iscript _x(\rho )=S_x\rho S_x^*$ is a Kraus instrument, then
$\ascript (\rho )=\sum _xS_x\rho S_x^*$ satisfies $\ascript =\iscripthat$ and $\iscript ^\ascript =\iscript$.

\begin{thm}    
\label{thm32}
$\iscript ^\ascript$ is an identity instrument if and only if $\ascript =I$.
\end{thm}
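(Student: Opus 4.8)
The plan is to prove the two implications separately; the forward implication is immediate, while the reverse one recycles the argument of Example~2.

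For the direction ``$\iscript^\ascript$ identity $\Rightarrow\ascript=I$'', I would suppose $\iscript_x^\ascript=\lambda_xI$ with $\lambda_x\in\sqbrac{0,1}$ and $\sum_x\lambda_x=1$, so that $S_x\rho S_x^*=\iscript_x^\ascript(\rho)=\lambda_x\rho$ for every $\rho\in\sscript(H)$ and every $x$. Summing over $x$ and using $(\iscript^\ascript)^\wedge=\ascript$ gives $\ascript(\rho)=\sum_xS_x\rho S_x^*=\paren{\sum_x\lambda_x}\rho=\rho$ for all $\rho$, whence $\ascript=I$.

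For the converse ``$\ascript=I\Rightarrow\iscript^\ascript$ identity'', I would start from $\sum_xS_x\rho S_x^*=\rho$ for all $\rho\in\sscript(H)$ and specialize to $\rho=P_\phi$ for a unit vector $\phi$, obtaining $\sum_x\ket{S_x\phi}\bra{S_x\phi}=P_\phi$. Exactly as in Example~2, pairing with any unit $\eta\perp\phi$ yields $0=\elbows{\eta,P_\phi\eta}=\sum_x\ab{\elbows{\eta,S_x\phi}}^2$, so $\elbows{\eta,S_x\phi}=0$ for every $\eta\perp\phi$ and hence $S_x\phi=c_x(\phi)\phi$ for some scalar $c_x(\phi)$; thus every unit vector is an eigenvector of each $S_x$.

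The one real step is to upgrade this to $S_x=c_x1$ with a single scalar $c_x$ independent of $\phi$. I would do this by evaluating $S_x$ on $\phi+\psi$ for orthonormal $\phi,\psi$: linearity gives $c_x(\phi)\phi+c_x(\psi)\psi=c_x(\chi)(\phi+\psi)$ with $\chi=\tfrac{1}{\sqrt2}(\phi+\psi)$, and linear independence forces $c_x(\phi)=c_x(\psi)$, so the eigenvalue is common to all unit vectors and $S_x$ is scalar. Setting $\lambda_x=\ab{c_x}^2$ then gives $\iscript_x^\ascript(\rho)=\ab{c_x}^2\rho=\lambda_x\rho$, while $\sum_xS_x^*S_x=1$ forces $\sum_x\lambda_x=1$; hence $\iscript^\ascript$ is an identity instrument. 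The only mild obstacle is this scalar-operator upgrade; alternatively, one could bypass it by invoking the unitary freedom in Kraus representations cited earlier, since $I$ admits the single Kraus operator $1$, so any representation must have $S_x=u_x1$ for scalars $u_x$ with $\sum_x\ab{u_x}^2=1$.
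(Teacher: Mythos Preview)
Your proof is correct and follows essentially the same route as the paper's: both directions match the paper's argument line for line. The only difference is that where the paper passes directly from ``$S_x\phi=c_x\phi$ for every unit $\phi$'' to ``$S_x=c_x1$'', you explicitly verify that the eigenvalue $c_x(\phi)$ is independent of $\phi$ via the $\phi+\psi$ trick; this fills in a step the paper leaves to the reader.
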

\begin{proof}
Let $\Omega$ be the value-space for $\iscript ^\ascript$. If $\iscript ^\ascript$ is an identity instrument, then $\iscript _x^\ascript (\rho )=\lambda _x\rho$ for every $x\in\Omega$, $\rho\in\sscript (H)$ where $\lambda _x\in\sqbrac{0,1}$ and
$\sum _{x\in\Omega}\lambda _x=1$. We then have that
\begin{equation*}
\ascript (\rho )=(\iscript ^\ascript )^\wedge (\rho )=\sum _{x\in\Omega}\iscript _x^\ascript (\rho )
    =\sum _{x\in\Omega}\lambda _x\rho =\rho
\end{equation*}
for every $\rho\in\sscript (H)$. Hence, $\ascript =I$. Conversely, suppose that $\ascript =I$, where $\ascript (\rho )=\sum _{x\in\Omega}S_x\rho S_x^*$. Letting $\rho = P_\phi$ be an atom we have that
\begin{equation*}
P_\phi =\ascript (P_\phi )=\sum _{x\in\Omega}S_xP_\phi S_x^*=\sum _{x\in\Omega}\ket{S_x\phi}\bra{S_x\phi}
\end{equation*}
Letting $\psi\in H$ with $\psi\perp\phi$ we obtain
\begin{equation*}
\sum _{x\in\Omega}\ab{\elbows{S_x\phi ,\psi}}^2=\sum _{x\in\Omega}\elbows{\psi ,S_x\phi}\elbows{S_x\phi ,\psi}
   =P_\phi\psi =0
\end{equation*}
Hence, $\elbows{S_x\phi ,\psi}=0$ for all $x\in\Omega$. It follows that $S_x\phi =c_x\phi$ for every $x\in\Omega$ where $c_x\in\complex$. Thus, $S_x=c_x1$ and we obtain
\begin{equation*}
\iscript _x^\ascript (\rho )=S_x\rho S_x^*=\ab{c_x}^2\rho
\end{equation*}
for all $\rho\in\sscript (H)$, $x\in\Omega$. We conclude that $\iscript ^\ascript$ is an identity instrument.
\end{proof}

\begin{cor}    
\label{cor33}
If $\iscript$ is an identity instrument, then $\iscripthat =I$.
\end{cor}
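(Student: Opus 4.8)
The plan is to read the corollary as a direct consequence of Theorem~\ref{thm32}, routed through the observation that an identity instrument is itself a Kraus instrument.

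First I would note that if $\iscript$ is an identity instrument with $\iscript_x=\lambda_x I$, then $\iscript_x(\rho)=\lambda_x\rho=(\lambda_x^{1/2}1)\rho(\lambda_x^{1/2}1)^*$, so $\iscript$ is precisely the Kraus instrument with Kraus operators $S_x=\lambda_x^{1/2}1$. As recorded in the discussion preceding Theorem~\ref{thm32}, any Kraus instrument $\iscript$ satisfies $\iscript^\ascript=\iscript$ for the channel $\ascript=\iscripthat$ built from its own operators. Hence $\iscript=\iscript^{\iscripthat}$, and since the left-hand side is an identity instrument, the instrument $\iscript^{\iscripthat}$ is an identity instrument as well.

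Then I would apply Theorem~\ref{thm32} with the channel $\ascript=\iscripthat$: because $\iscript^{\iscripthat}$ is an identity instrument, the ``only if'' direction of the theorem forces $\iscripthat=I$, which is exactly the claim.

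Alternatively, one can bypass Theorem~\ref{thm32} entirely by computing directly, since $\iscripthat(\rho)=\sum_x\iscript_x(\rho)=\sum_x\lambda_x\rho=\rho$ for every $\rho\in\sscript(H)$, using $\sum_x\lambda_x=1$, so $\iscripthat=I$. There is no genuine obstacle here; the corollary is immediate. The only point worth verifying in the Theorem~\ref{thm32} route is the identification of an identity instrument as a Kraus instrument of the form $\iscript^\ascript$, and this follows at once from the single-operator representation $S_x=\lambda_x^{1/2}1$ of the scaled identity channel $\lambda_x I$.
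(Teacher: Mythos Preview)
Your proposal is correct. The paper states Corollary~\ref{cor33} without proof, presumably because it is meant to follow immediately from Theorem~\ref{thm32}; your first route via the Kraus representation $S_x=\lambda_x^{1/2}1$ and the identification $\iscript=\iscript^{\iscripthat}$ is exactly that deduction, and your alternative direct computation $\iscripthat(\rho)=\sum_x\lambda_x\rho=\rho$ is in fact the very calculation appearing in the forward direction of the paper's proof of Theorem~\ref{thm32}.
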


We conjecture that the converse of this corollary holds.
  
\begin{lem}    
\label{lem34}
For $\iscript ,\jscript\in\rmin (H)$ we have that
$(\iscript\circ\jscript )^\wedge =(\jscript\mid\iscript )^\wedge =\jscripthat\iscripthat$.
\end{lem}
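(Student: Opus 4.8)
The plan is to evaluate each of the three maps on an arbitrary state $\rho\in\sscript(H)$ and show the values agree. Starting from the left-hand side, the definitions of the instrument channel and of the product instrument give
\[
(\iscript\circ\jscript)^\wedge(\rho)=\sum_{(x,y)\in\Omega_\iscript\times\Omega_\jscript}(\iscript\circ\jscript)_{(x,y)}(\rho)=\sum_{x\in\Omega_\iscript}\sum_{y\in\Omega_\jscript}\jscript_y\sqbrac{\iscript_x(\rho)}.
\]

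I would then regroup this finite double sum by the index $y$. Because the conditioned instrument was defined by $(\jscript\mid\iscript)_y=\sum_{x\in\Omega_\iscript}(\iscript\circ\jscript)_{(x,y)}$, summing the inner terms over $x$ returns exactly $(\jscript\mid\iscript)_y(\rho)$, whence
\[
(\iscript\circ\jscript)^\wedge(\rho)=\sum_{y\in\Omega_\jscript}\sum_{x\in\Omega_\iscript}\jscript_y\sqbrac{\iscript_x(\rho)}=\sum_{y\in\Omega_\jscript}(\jscript\mid\iscript)_y(\rho)=(\jscript\mid\iscript)^\wedge(\rho).
\]
This yields the first equality.

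For the second equality I would invoke the identity $(\jscript\mid\iscript)_y(\rho)=\jscript_y\sqbrac{\iscripthat(\rho)}$ already derived above. Summing over $y$ and using that $\sum_{y\in\Omega_\jscript}\jscript_y=\jscripthat$ by the very definition of the instrument channel gives
\[
(\jscript\mid\iscript)^\wedge(\rho)=\sum_{y\in\Omega_\jscript}\jscript_y\sqbrac{\iscripthat(\rho)}=\jscripthat\sqbrac{\iscripthat(\rho)}=(\jscripthat\iscripthat)(\rho).
\]
Since $\rho$ was arbitrary, the three maps coincide.

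I do not expect any genuine obstacle: the argument is purely a reindexing of finite sums together with two facts already in hand, namely the defining relation $(\jscript\mid\iscript)_y=\sum_x(\iscript\circ\jscript)_{(x,y)}$ and the formula $(\jscript\mid\iscript)_y(\rho)=\jscript_y\sqbrac{\iscripthat(\rho)}$. The one point that warrants a moment's attention is that the operation $\jscript_y$ distributes over the sum $\sum_x\iscript_x(\rho)$, so that the double sum may be collapsed; this is exactly the linearity of $\jscript_y$ (as the restriction of a linear completely positive map) already exploited when the conditioned-instrument formula was obtained, and so may be used without further comment.
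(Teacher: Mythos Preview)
Your proof is correct and is essentially the same as the paper's: both arguments simply expand the definitions and use linearity to collapse the finite double sum, invoking the already-established identity $(\jscript\mid\iscript )_y(\rho )=\jscript _y\sqbrac{\iscripthat (\rho )}$. The only cosmetic difference is that the paper verifies $(\iscript\circ\jscript )^\wedge=\jscripthat\iscripthat$ and $(\jscript\mid\iscript )^\wedge=\jscripthat\iscripthat$ separately rather than first showing $(\iscript\circ\jscript )^\wedge=(\jscript\mid\iscript )^\wedge$, but the content is identical.
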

\begin{proof}
For all $\rho\in\sscript (H)$ we have that
\begin{equation*}
(\iscript\circ\jscript )^\wedge (\rho )=(\iscript\circ\jscript )_{\Omega _\iscript\times\Omega _\jscript}(\rho )
   =\jscript _{\Omega _\jscript}\paren{\iscript _{\Omega _\iscript}(\rho )}=\jscripthat\paren{\iscripthat (\rho )}
\end{equation*}
Hence, $(\iscript\circ\jscript )^\wedge =\jscripthat\iscripthat$. Moreover, since
$(\jscript\mid\iscript )_y(\rho )=\jscript _y\paren{\iscripthat (\rho )}$ we obtain
\begin{equation*}
(\jscript\mid\iscript )^\wedge (\rho )=\sum _{y\in\Omega _\jscript}(\jscript\mid\iscript )_y(\rho )
    =\jscript _{\Omega _\jscript}\paren{\iscripthat (\rho )}=\jscripthat\paren{\iscripthat (\rho )}
\end{equation*}
and the result follows.
\end{proof}

\section{Measurement Models}  
A \textit{finite measurement model} (FIMM) is a 5-tuple $\mscript =(H,K,\eta ,\nu ,F)$ where $H,K$ are finite-dimensional Hilbert spaces called the \textit{base} and \textit{probe system}, respectively, $\eta\in\sscript (K)$ is an \textit{initial state},
$\nu\colon\sscript (H\otimes K)\to\sscript (H\otimes K)$ is a channel describing the measurement interaction between the base and probe systems and $F\in\oscript (K)$ is the \textit{pointer observable} \cite{bgl95,gud220,hz12}. The
\textit{model instrument} $\iscript ^\mscript\in\rmin (H)$ is the unique instrument given by
\begin{equation*}
\iscript _X^\mscript (\rho )=\rmtr _K\sqbrac{\nu (\rho\otimes\eta )(I\otimes F_X)}
\end{equation*}
where $\rmtr _K$ is the partial trace over $K$ \cite{bgl95,gud220,hz12} and the \textit{model observable} is
$A^\mscript =J(\iscript ^\mscript )$. We say that $\mscript$ \textit{measures} $\iscript ^\mscript$ and $A^\mscript$. The FIMM $\mscript$ is \textit{sharp} if $F$ is sharp.

An observable $A$ is \textit{commutative} if $A_xA_y=A_yA_x$ for all $x,y\in\Omega _A$ and $A,B\in\oscript (H)$
\textit{commute} if $A_xB_y=B_yA_x$ for all $x\in\Omega _A$, $y\in\Omega _B$. We say that two FIMMs
$\mscript _1,\mscript _2$ are \textit{simultaneous} if they are identical except for their probe observables $F,G$, respectively and $\mscript _1,\mscript _2$ \textit{commute} if $F,G$ commute.

\begin{thm}    
\label{thm41}
Two instruments $\iscript ,\jscript\in\rmin (H)$ coexist if and only if there exist simultaneous, commuting, sharp FIMMs
\begin{equation}                
\label{eq41}
\mscript _i=(H,K,\eta ,\nu ,F^i)\quad i=1,2
\end{equation}
such that $\mscript _1$ measures $\iscript$ and $\mscript _2$ measures $\jscript$.
\end{thm}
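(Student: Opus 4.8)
The plan is to prove both implications by translating between an instrument's pointer observable on the probe and the two marginals of that observable; the only external ingredient I would need is the standard finite-dimensional dilation fact that every instrument in $\rmin (H)$ is measured by some \emph{sharp} FIMM \cite{hz12,nc00}. Throughout I read ``$\mscript =(H,K,\eta ,\nu ,F)$ measures $\iscript$'' as $\iscript =\iscript ^\mscript$, that is, $\iscript _x(\rho )=\rmtr _K[\nu (\rho\otimes\eta )(I\otimes F_x)]$ for all $\rho\in\sscript (H)$.

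For the ``if'' direction I would start from simultaneous, commuting, sharp FIMMs $\mscript _1=(H,K,\eta ,\nu ,F^1)$ and $\mscript _2=(H,K,\eta ,\nu ,F^2)$ with $\mscript _1$ measuring $\iscript$ and $\mscript _2$ measuring $\jscript$. Since $F^1$ and $F^2$ commute, the family $C_{(x,y)}=F^1_xF^2_y$ is itself an observable on $K$: each $C_{(x,y)}$ is an effect (a product of commuting effects) and $\sum _{x,y}C_{(x,y)}=\paren{\sum _xF^1_x}\paren{\sum _yF^2_y}=1$. Let $\kscript$ be the model instrument of the FIMM $(H,K,\eta ,\nu ,C)$, so that $\kscript _{(x,y)}(\rho )=\rmtr _K[\nu (\rho\otimes\eta )(I\otimes F^1_xF^2_y)]$ and $\kscript\in\rmin (H)$ has value-space $\Omega _\iscript\times\Omega _\jscript$. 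Summing on one index and using $\sum _yF^2_y=1$, $\sum _xF^1_x=1$ then gives $\sum _y\kscript _{(x,y)}=\iscript _x$ and $\sum _x\kscript _{(x,y)}=\jscript _y$, so $\kscript$ is a joint instrument exhibiting the coexistence of $\iscript$ and $\jscript$.

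For the ``only if'' direction I would take a witness $\kscript\in\rmin (H)$ of coexistence, so $\iscript _x=\sum _y\kscript _{(x,y)}$ and $\jscript _y=\sum _x\kscript _{(x,y)}$, and then invoke the dilation fact to fix a sharp FIMM $\mscript =(H,K,\eta ,\nu ,C)$ measuring $\kscript$, where $C$ is a projection-valued pointer observable with value-space $\Omega _\iscript\times\Omega _\jscript$. I would define the marginal pointers $F^1_x=\sum _yC_{(x,y)}$ and $F^2_y=\sum _xC_{(x,y)}$. Because the $C_{(x,y)}$ are mutually orthogonal projections, each $F^1_x$ and each $F^2_y$ is again a projection, so $F^1$ and $F^2$ are sharp, and orthogonality forces $F^1_xF^2_y=C_{(x,y)}=F^2_yF^1_x$, so $F^1$ and $F^2$ commute. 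Setting $\mscript _i=(H,K,\eta ,\nu ,F^i)$ produces simultaneous, commuting, sharp FIMMs, and since the pointer effect enters the model instrument linearly, the marginal relations yield $\iscript ^{\mscript _1}_x=\sum _y\kscript _{(x,y)}=\iscript _x$ and $\iscript ^{\mscript _2}_y=\sum _x\kscript _{(x,y)}=\jscript _y$; hence $\mscript _1$ measures $\iscript$ and $\mscript _2$ measures $\jscript$.

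The hard part will be the ``only if'' direction, and specifically securing a \emph{sharp} dilation of the joint instrument $\kscript$. Once $C$ is projection-valued, the marginal construction delivers everything at once: commutativity of $F^1$ and $F^2$ is automatic because their product is literally $C_{(x,y)}$. If instead only an unsharp model of $\kscript$ were available, the naive marginals $\sum _yC_{(x,y)}$ and $\sum _xC_{(x,y)}$ need not commute and their product need not reproduce $C$, so the whole scheme would break down. Thus the crux is to arrange (enlarging the probe $K$ if necessary) a sharp measurement model for $\kscript$ \emph{before} passing to marginals.
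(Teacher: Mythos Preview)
Your proposal is correct and follows essentially the same route as the paper: for the ``only if'' direction you dilate the joint instrument $\kscript$ to a sharp FIMM and take the two marginal pointer observables, and for the ``if'' direction you multiply the commuting sharp pointers to get a joint pointer and read off the joint instrument. Your explicit identity $F^1_xF^2_y=C_{(x,y)}$ and your discussion of why sharpness of the dilation is the crux are even a bit more detailed than the paper's argument, but the structure is the same.
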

\begin{proof}
If $\iscript ,\jscript$ coexist, there is a joint instrument $\kscript\in\rmin (H)$. Now there exists a sharp FIMM
$\mscript =(H,K,\eta ,\nu ,F)$ \cite{bgl95,gud220,hz12} such that 
\begin{equation}                
\label{eq42}
\kscript _{(x,y)}(\rho )=\rmtr _K\sqbrac{\nu (\rho\otimes\eta )(I\otimes F_{(x,y)})}
\end{equation}
for all $(x,y)\in\Omega _\kscript$. Since $F$ is sharp, $F_{(x,y)}$ and $F_{(x',y')}$ are mutually orthogonal projections for all
$(x,y)\ne (x',y')$. Letting $F_x^1=\sum\limits _{y\in\Omega _\jscript}F_{(x,y)}$ and
$F_y^2=\sum\limits _{x\in\Omega _\iscript}F_{(x,y)}$ we have that $F_x^1$ and $F_y^2$ are mutually commuting projections and $F^1,F^2\in\oscript (K)$. We then obtain
\begin{align*}
\iscript _x(\rho )&=\sum _{y\in\Omega _\jscript}\kscript _{(x,y)}(\rho )=\rmtr _K\sqbrac{\nu (\rho\otimes\eta )(I\otimes F_x^1)}\\
\jscript _y(\rho )&=\sum _{x\in\Omega _\iscript}\kscript _{(x,y)}(\rho )=\rmtr _K\sqbrac{\nu (\rho\otimes\eta )(I\otimes F_y^2)}
\end{align*}
Hence, $\mscript _1$ and $\mscript _2$ given in \eqref{eq41} are simultaneous, commuting, sharp FIMMs that measure
$\iscript$ and $\jscript$. Conversely, suppose $\mscript _1$ and $\mscript _2$ given in \eqref{eq41} are simultaneous, commuting, sharp FIMMs that measure $\iscript$ and $\jscript$, respectively. Define the effects $F_{(x,y)}=F_x^1F_y^2$,
$(x,y)\in\Omega _\iscript\times\Omega _\jscript$. Then
\begin{equation*}
F=\brac{F_{(x,y)}\colon (x,y)\in\Omega _\iscript\times\Omega _\jscript}\in\oscript (K)
\end{equation*}
and $\mscript =(H,K,\eta ,\nu ,F)$ is a FIMM. We conclude that $\kscript _{(x,y)}$ given in \eqref{eq42} is an instrument on
$H$ and $\sum _{y\in\Omega _\jscript}\kscript _{(x,y)}=\iscript _x$,
$\sum _{x\in\Omega _\iscript}\kscript _{(x,y)}=\jscript _y$. Hence, $\iscript$ and $\jscript$ coexist.
\end{proof}

The analogous result for observables does not appear to follow directly from Theorem~\ref{thm41}, but we can use the following lemma.

\begin{lem}    
\label{lem42}
If $A,B\in\oscript (H)$ coexist, then there exists an $A$-compatible $\iscript\in\rmin (H)$ and a $B$-compatible
$\jscript\in\rmin (H)$ that coexist.
\end{lem}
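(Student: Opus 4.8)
The plan is to build the two instruments as marginals of the Lüders instrument attached to a joint observable for $A$ and $B$. Since $A$ and $B$ coexist, there is a joint observable $C_{(x,y)}\in\escript (H)$ with value-space $\Omega_A\times\Omega_B$ satisfying $A_x=\sum_y C_{(x,y)}$ and $B_y=\sum_x C_{(x,y)}$. Viewing $C=\brac{C_{(x,y)}}$ as an element of $\oscript (H)$ with value-space $\Omega_A\times\Omega_B$, I would form its Lüders instrument $\kscript =K(C)=\lscript ^C$, so that $\kscript _{(x,y)}(\rho )=C_{(x,y)}\circ\rho =C_{(x,y)}^{1/2}\rho C_{(x,y)}^{1/2}$. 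This $\kscript$ lies in $\rmin (H)$ with value-space $\Omega_A\times\Omega_B$, and by Theorem~\ref{thm21}(i) its compatible observable is $J(\kscript )=JK(C)=C$.

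Next I would define the marginal instruments $\iscript _x=\sum_{y}\kscript _{(x,y)}$ and $\jscript _y=\sum_{x}\kscript _{(x,y)}$. These are genuine instruments, not merely sums of operations, because each is a value of the operation-valued measure $X\mapsto\kscript _X$ associated with $\kscript$ (take $X=\brac{x}\times\Omega_B$ and $X=\Omega_A\times\brac{y}$ respectively), and moreover $\iscripthat =\jscripthat =\kscripthat\in\cscript (H)$. By the very way they are defined, $\iscript$ and $\jscript$ coexist, with $\kscript$ serving as the joint instrument demanded in \eqref{eq23}.

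It then remains to verify the compatibility statements, and this is exactly where the marginal relations for $C$ are consumed. For every $\rho\in\sscript (H)$ and $x\in\Omega_A$,
\[
\rmtr\sqbrac{\iscript _x(\rho )}=\sum_y\rmtr\sqbrac{\kscript _{(x,y)}(\rho )}=\sum_y\rmtr (\rho C_{(x,y)})=\rmtr\paren{\rho\sum_y C_{(x,y)}}=\rmtr (\rho A_x),
\]
so $J(\iscript )=A$ and $\iscript$ is $A$-compatible; the identical computation with the roles of $x$ and $y$ exchanged gives $J(\jscript )=B$, so $\jscript$ is $B$-compatible. This completes the argument.

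I do not expect a genuine obstacle here: the construction is essentially forced, and the computation is a reprise (run in the reverse direction) of the argument in Lemma~\ref{lem26}(i). The only point deserving care is the bookkeeping that certifies $\iscript$ and $\jscript$ to be instruments rather than arbitrary sums of operations, which is handled cleanly by invoking the operation-valued measure structure of $\kscript$ already established when instruments were introduced.
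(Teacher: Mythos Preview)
Your argument is correct, but the paper takes a slightly different route. Where you form the L\"uders instrument $\kscript=\lscript^C$ for the joint observable $C$ and then pass to its marginals, the paper instead fixes any state $\alpha\in\sscript(H)$ and uses the \emph{trivial} instruments $\iscript_x(\rho)=\rmtr(\rho A_x)\alpha$, $\jscript_y(\rho)=\rmtr(\rho B_y)\alpha$, with joint instrument $\kscript_{(x,y)}(\rho)=\rmtr(\rho C_{(x,y)})\alpha$. The trivial construction has the minor advantage that the marginals are themselves trivial instruments of the same explicit form, so no separate verification that $\iscript,\jscript\in\rmin(H)$ is needed; in your approach the marginals of $\lscript^C$ are generally not L\"uders instruments for $A$ and $B$, but as you correctly note this is irrelevant since only $A$- and $B$-compatibility is required. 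Your route is arguably more canonical (no auxiliary state $\alpha$ is introduced), and it dovetails nicely with the observation you make that the computation is essentially Lemma~\ref{lem26}(i) read backwards.
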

\begin{proof}
Since $A,B\in\oscript (H)$ coexist, there exists a joint observable $C\in\oscript (H)$. Let $\iscript ,\jscript\in\rmin (H)$ be the trivial instruments $\iscript _x(\rho )=\rmtr (\rho A_x)\alpha$, $\jscript _y(\rho )=\rmtr (\rho B_y)\alpha$ for
$\alpha\in\sscript (H)$. Define $\kscript\in\rmin (H)$ by
\begin{equation*}
\kscript _{(x,y)}(\rho )=\rmtr\sqbrac{\rho C_{(x,y)}}\alpha
\end{equation*}
We then obtain
\begin{align*}
\sum _{y\in\Omega _\jscript}\kscript _{(x,y)}(\rho )&=\sum _{y\in\Omega _\jscript}\rmtr\sqbrac{\rho C_{(x,y)}}\alpha
    =\rmtr\paren{\rho\sum _{y\in\Omega _\jscript}C_{(x,y)}}\alpha\\
    &=\rmtr (\rho A_x)\alpha =\iscript _x(\rho )
\end{align*}
and similarly, $\sum\limits _{x\in\Omega _\jscript}\kscript _{(x,y)}(\rho )=\jscript _y(\rho )$ for all $\rho\in\sscript (H)$. Then
$\iscript$ is $A$-compatible, $\jscript$ is $B$-compatible and $\iscript$, $\jscript$ coexist.
\end{proof}

\begin{cor}    
\label{cor43}
Two observables $A,B\in\oscript (H)$ coexist if and only if there exist simultaneous, commuting, sharp FIMMs
\begin{equation*}
\mscript _i=(H,K,\eta ,\nu ,F^i)\quad i=1,2
\end{equation*}
\end{cor}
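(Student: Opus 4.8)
The plan is to bootstrap from the instrument-level result of Theorem~\ref{thm41}, using Lemma~\ref{lem42} to pass from coexisting observables to coexisting compatible instruments and Lemma~\ref{lem26}(i) to pass back. Since an FIMM measures an instrument and its model observable is the $J$-image of that instrument, both directions of the corollary should reduce to the corresponding directions of Theorem~\ref{thm41} after translating through $J$.

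For the forward implication I would assume $A,B\in\oscript(H)$ coexist and invoke Lemma~\ref{lem42} to obtain an $A$-compatible instrument $\iscript$ and a $B$-compatible instrument $\jscript$ that coexist; thus $J(\iscript)=A$ and $J(\jscript)=B$. Applying Theorem~\ref{thm41} to the coexisting pair $\iscript,\jscript$ produces simultaneous, commuting, sharp FIMMs $\mscript_1,\mscript_2$ of the form \eqref{eq41} with $\mscript_1$ measuring $\iscript$ and $\mscript_2$ measuring $\jscript$. Because the model observable of $\mscript_i$ is $J$ applied to the instrument it measures, $\mscript_1$ and $\mscript_2$ have model observables $A$ and $B$, as required.

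For the converse I would start from simultaneous, commuting, sharp FIMMs $\mscript_1,\mscript_2$ whose model observables are $A$ and $B$, set $\iscript=\iscript^{\mscript_1}$ and $\jscript=\iscript^{\mscript_2}$ so that $J(\iscript)=A$ and $J(\jscript)=B$, and apply the converse half of Theorem~\ref{thm41} to conclude that $\iscript$ and $\jscript$ coexist; Lemma~\ref{lem26}(i) then gives that $A$ and $B$ coexist. The whole argument is essentially a diagram chase, so I expect no genuine difficulty; the only step requiring care is the bookkeeping, namely keeping straight that ``$\mscript$ measures $\iscript$'' means $\iscript^\mscript=\iscript$ and hence $A^\mscript=J(\iscript^\mscript)$ equals the intended observable, so that the observable-level statement lines up exactly with the instrument-level statement of Theorem~\ref{thm41}.
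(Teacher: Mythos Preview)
Your argument is correct and matches the paper's proof almost verbatim in the forward direction (Lemma~\ref{lem42} followed by Theorem~\ref{thm41}). For the converse the paper re-runs the construction $F_{(x,y)}=F_x^1F_y^2$ from Theorem~\ref{thm41} and reads off a joint observable directly, whereas you invoke the converse of Theorem~\ref{thm41} to get coexisting instruments and then apply Lemma~\ref{lem26}(i); these are the same idea, and your route is arguably cleaner since it is a pure reduction to already-established results.
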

such that $\mscript _1$ measures $A$ and $\mscript _2$ measures $B$.
\begin{proof}
If $A,B$ coexist, by Lemma~\ref{lem42} there exists an $A$-compatible $\iscript\in\rmin (H)$ and a $B$-compatible
$\jscript\in\rmin (H)$ that coexist. By Theorem~\ref{thm41} there are simultaneous, commuting, sharp FIMMs given by
\eqref{eq41} such that $\mscript _1$ measures $\iscript$ and $\mscript _2$ measures $\jscript$. Since $\iscript$ is
$A$-compatible and $\jscript$ is $B$-compatible, $\mscript _1$ also measures $A$ and $\mscript _2$ also measures $B$. Conversely, suppose $\mscript _1$, $\mscript _2$ given by \eqref{eq41} are simultaneous, commuting, sharp FIMMs such that $\mscript _1$, $\mscript _2$ measure $A$, $B$, respectively. Defining $F_{(x,y)}=F_x^1F_y^2$ as in
Theorem~\ref{thm41}, $F\in\oscript (K)$ and the FIMM $\mscript =(H,K,\eta ,\nu ,F)$ measures a joint observable $C$ for $A$ and $B$ so $A$, $B$ coexist.
\end{proof}

Note that this work easily generalizes to three or more observables and instruments.

Consider a FIMM $\mscript =(H,K,\eta ,\nu ,F)$ for which $\nu =\nu _i\otimes\nu _2$ is factorized. Then 
\begin{align*}
\iscript _x(\rho )&=\rmtr _K\sqbrac{\nu _1\otimes\nu _2(\rho\otimes\eta )(I\otimes F_x)}
   =\rmtr _K\sqbrac{\nu _1(\rho )\otimes\nu _2(\eta )(I\otimes F_x)}\\
   &=\rmtr _K\sqbrac{\nu _1\rho )\otimes\nu _2(\eta )F_x}=\rmtr\sqbrac{\nu _2(\eta )F_x}\nu _1(\rho )
\end{align*}
This is not interesting because $\iscript$ is just a multiple of the channel $\nu _1$ and is similar to an identity instrument. The corresponding observable satisfies
\begin{equation*}
\rmtr (\rho A_x)=\rmtr\sqbrac{\iscript _x(\rho )}=\rmtr\sqbrac{\nu _2(\eta )F_x}
\end{equation*}
Hence, $A_x=\rmtr\sqbrac{\nu _2 (\eta )F_x}1$ is an identity observable which again is not interesting. We now consider a channel on $H\otimes K$ that is more general than a factorized channel but still has some factorized properties.

A \textit{von~Neumann operator} on $H\otimes K$ is a unitary operator $U$ for which there exists orthonormal basis
$\brac{\psi _i}$, $\brac{\phi _j}$ on $H,K$, respectively, such that
\begin{equation}                
\label{eq43}
U(\psi _i\otimes\phi _1)=\psi _i\otimes\phi _i\quad\hbox{ for all }i
\end{equation}
Condition \eqref{eq43} does not completely specify $U$ and there are many unitary operators that satisfy \eqref{eq43}. A completely defined unitary operator $U$ satisfying \eqref{eq43} is given by \eqref{eq43} and
\begin{align*}
U(\psi _i\otimes\phi _j)&=\psi _i\otimes\phi _j\quad\hbox{ if }i\ne j,j\ne 1,\\
U(\psi _i\otimes\phi _i)&=\psi _i\otimes\phi _1\quad\hbox{ if }i\ne 1
\end{align*}
A \textit{von~Neumann model} is a FIMM $\mscript =(H,K,P_{\phi _1},U,F)$ where $P_{\phi _1}$ is the initial pure state of the probe system and $U$ specifies a unitary channel $\nu (\rho ')=U\rho 'U^*$ for all $\rho'\in\sscript (H\otimes K)$, that is given by a von~Neumann operator $U$ \cite{hz12}.

\begin{thm}    
\label{thm44}
If $\mscript =(H,K,P_{\phi _1},U,F)$ is a von~Neumann model, then the measured instrument is
\begin{equation*}
\iscript _X(\rho )=\sum _{i,j}\elbows{\psi _i\otimes\phi _j,\rho\otimes F_X\psi _j\otimes\phi _i}\ket{\psi _i}\bra{\psi _j}
\end{equation*}
The corresponding instrument channel is
\begin{equation*}
\iscripthat (\rho )=\sum _iP_{\psi _i}\rho P_{\psi _i}
\end{equation*}
and the measured observable is
\begin{equation}                
\label{eq44}
A_X=\sum _i\elbows{\phi _i,F_X\phi _i}P_{\psi _i}
\end{equation}
\end{thm}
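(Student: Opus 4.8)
The plan is to compute the model instrument $\iscript_X(\rho)=\rmtr_K\sqbrac{\nu(\rho\otimes P_{\phi_1})(I\otimes F_X)}$ directly, exploiting that the initial probe state is the \emph{pure} state $P_{\phi_1}=\ket{\phi_1}\bra{\phi_1}$. The key observation is that, because of this purity, the conjugation $\rho'\mapsto U\rho'U^*$ only ever requires the action of $U$ on vectors $\psi_i\otimes\phi_1$, and that action is prescribed exactly by \eqref{eq43}; how $U$ is completed off these vectors never enters. This is precisely why the model is well defined even though \eqref{eq43} does not determine $U$.

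First I would expand $\rho=\sum_{i,j}\elbows{\psi_i,\rho\psi_j}\ket{\psi_i}\bra{\psi_j}$ in the basis $\brac{\psi_i}$, so that $\rho\otimes P_{\phi_1}=\sum_{i,j}\elbows{\psi_i,\rho\psi_j}\ket{\psi_i\otimes\phi_1}\bra{\psi_j\otimes\phi_1}$. Applying $U(\,\cdot\,)U^*$ and using $U(\psi_k\otimes\phi_1)=\psi_k\otimes\phi_k$ from \eqref{eq43} gives
\[
\nu(\rho\otimes P_{\phi_1})=\sum_{i,j}\elbows{\psi_i,\rho\psi_j}\ket{\psi_i\otimes\phi_i}\bra{\psi_j\otimes\phi_j}.
\]
Multiplying on the right by $I\otimes F_X$ turns the bra into $\bra{\psi_j\otimes F_X\phi_j}$ (using $F_X^*=F_X$), so each summand is the pure tensor $\ket{\psi_i}\bra{\psi_j}\otimes\ket{\phi_i}\bra{F_X\phi_j}$, whose partial trace over $K$ is $\elbows{F_X\phi_j,\phi_i}\ket{\psi_i}\bra{\psi_j}$. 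Using self-adjointness once more, $\elbows{F_X\phi_j,\phi_i}=\elbows{\phi_j,F_X\phi_i}$, giving $\iscript_X(\rho)=\sum_{i,j}\elbows{\psi_i,\rho\psi_j}\elbows{\phi_j,F_X\phi_i}\ket{\psi_i}\bra{\psi_j}$. I would then recognize the coefficient as $\elbows{\psi_i\otimes\phi_j,\rho\otimes F_X(\psi_j\otimes\phi_i)}$, which is the stated instrument formula.

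The remaining two assertions are short consequences. For the channel, take $X$ to be the full value space so $F_X=1$; then $\elbows{\phi_j,\phi_i}=\delta_{ij}$ collapses the double sum to $\iscripthat(\rho)=\sum_i\elbows{\psi_i,\rho\psi_i}P_{\psi_i}$, and since $P_{\psi_i}\rho P_{\psi_i}=\elbows{\psi_i,\rho\psi_i}P_{\psi_i}$ this equals $\sum_iP_{\psi_i}\rho P_{\psi_i}$. For the observable I would invoke the probability reproducing condition \eqref{eq21}: taking the trace of $\iscript_X(\rho)$ and using $\rmtr\paren{\ket{\psi_i}\bra{\psi_j}}=\delta_{ij}$ yields $\rmtr\sqbrac{\iscript_X(\rho)}=\sum_i\elbows{\psi_i,\rho\psi_i}\elbows{\phi_i,F_X\phi_i}$, which is exactly $\rmtr\paren{\rho\sum_i\elbows{\phi_i,F_X\phi_i}P_{\psi_i}}$. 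Since $\rmtr\sqbrac{\iscript_X(\rho)}=\rmtr\sqbrac{\rho J(\iscript)_X}$ for all $\rho\in\sscript(H)$, this forces \eqref{eq44}.

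The conceptual content is concentrated in the first paragraph's reduction of the unspecified unitary to its values on $\psi_i\otimes\phi_1$; after that, I expect the only real difficulty to be bookkeeping, namely carrying the adjoint correctly when $I\otimes F_X$ acts through the bra and verifying the conjugation identity $\elbows{F_X\phi_j,\phi_i}=\elbows{\phi_j,F_X\phi_i}$. The partial-trace step is routine once each summand is written as the pure tensor $\ket{\psi_i}\bra{\psi_j}\otimes\ket{\phi_i}\bra{F_X\phi_j}$.
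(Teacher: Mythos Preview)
Your proposal is correct and follows essentially the same computation as the paper. The only cosmetic difference is that the paper first treats the pure-state case $\rho=P_\psi$, writes $\psi=\sum_i c_i\psi_i$, and then observes $c_i\cbar_j=\elbows{\psi_i,P_\psi\psi_j}$ to pass to general $\rho$ by affinity, whereas you expand $\rho$ directly in the $\brac{\psi_i}$ basis from the outset; the partial-trace and observable steps are identical.
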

\begin{proof}
If $P_\psi$ is a pure state on $H$, then
\begin{align*}
U(P_\psi\otimes P_{\phi _1})U^*&=UP_{\psi\otimes\phi _1}U^*=U\paren{\ket{\psi\otimes\phi _1}\bra{\psi\otimes\phi _1}}U^*\\
    &=\ket{U(\psi\otimes\phi _1)}\bra{U(\psi\otimes\phi _1)}=P_{U(\psi\otimes\phi _1)}
\end{align*}
Now $\psi =\sum c_i\psi _i$ where $c_i=\elbows{\psi _i,\psi}$ so we have that
\begin{align*}
U(P_\psi\otimes P_{\phi _1})U^*
   &=\ket{U\paren{\sum _ic_i\psi _i\otimes\phi _1}}\bra{U\paren{\sum _jc_j\psi _j\otimes\phi _1}}\\
   &=\sum _{i,j}c_i\cbar _j\ket{U(\psi _i\otimes\phi _1)}\bra{U(\psi _j\otimes\phi _1)}\\
   &=\sum _{i,j}c_i\cbar _j\ket{\psi _i\otimes\phi _i}\bra{\psi _j\otimes\phi _j}\\
   &=\sum _{i,j}c_i\cbar _j\ket{\psi _i}\bra{\psi _j}\otimes\ket{\phi _i}\bra{\phi _j}
\end{align*}
Hence,
\begin{align*}
\iscript _X(P_\psi )&=\rmtr _K\sqbrac{U(P_\psi\otimes P_{\phi _1})U^*(I\otimes F_X)}\\
    &=\sum _{i,j}c_i\cbar _j\rmtr _K\sqbrac{\ket{\psi _i}\bra{\psi _j}\otimes\ket{\phi _i}\bra{\phi _j}F_X}\\
    &=\sum _{i,j}c_i\cbar _j\sqbrac{\rmtr\ket{\phi _i}\bra{\phi _j}F_X}\ket{\psi_i}\bra{\psi _j}\\
    &=\sum _{i,j}c_i\cbar _j\elbows{\phi _j,F_X\phi _i}\ket{\psi _i}\bra{\psi _j}\\
    &=\sum _{i,j}\elbows{\psi _i,\psi}\elbows{\psi ,\psi _j}\elbows{\phi _j,F_X\phi _i}\ket{\psi _i}\bra{\psi _j}\\
    &=\sum _{i,j}\elbows{\psi _i,P_\psi\psi _j}\elbows{\phi _j,F_X\phi _i}\ket{\psi _i}\bra{\psi _j}
\end{align*}
It follows that if $\rho\in\sscript (H)$, then
\begin{align*}
\iscript _X(\rho )&=\sum _{i,j}\elbows{\psi _i,\rho\psi _j}\elbows{\phi _j,F_X\phi _i}\ket{\psi _i}\bra{\psi _j}\\
   &=\sum _{i,j}\elbows{\psi _i\otimes\phi _j,\rho\otimes F_X\psi _j\otimes\phi _i}\ket{\psi _i}\bra{\psi _j}
\end{align*}
The instrument channel becomes
\begin{equation*}
\iscripthat (\rho )=\iscript _\Omega (\rho )=\sum _i\elbows{\psi _i,\rho\psi _i}\ket{\psi _i}\bra{\psi _i}
    =\sum _iP_{\psi _i}\rho P_{\psi _i}
\end{equation*}
The measured observable $A$ satisfies
\begin{equation*}
\elbows{\psi ,A_X\psi}=\rmtr (P_\psi A_X)=\rmtr\sqbrac{\iscript _X(P_\psi )}
    =\sum _i\ab{\elbows{\psi ,\psi _i}}^2\elbows{\phi _i,F_X\phi _i}
\end{equation*}
and it follows that
\begin{equation*}
A_X=\sum _i\elbows{\phi _i,F_X\phi _i}P_{\psi _i}
\qedhere
\end{equation*}
\end{proof}

\begin{cor}    
\label{cor45}
An observable $A=\brac{A_x\colon x\in\Omega}$ is measured by a von~Neumann model if and only if $A$ is commutative.
\end{cor}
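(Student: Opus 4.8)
The plan is to read both directions off the explicit formula \eqref{eq44} of Theorem~\ref{thm44}. First I would treat the forward direction. Suppose $A$ is measured by a von~Neumann model $\mscript =(H,K,P_{\phi _1},U,F)$. Then by \eqref{eq44} each effect has the form $A_x=\sum _i\elbows{\phi _i,F_x\phi _i}P_{\psi _i}$, where $\brac{\psi _i}$ is the orthonormal basis on $H$ appearing in the von~Neumann operator $U$ via \eqref{eq43}. Writing $c_i^x=\elbows{\phi _i,F_x\phi _i}\in\sqbrac{0,1}$ (these lie in $\sqbrac{0,1}$ since $0\le F_x\le 1$) and using the fact that the atoms $P_{\psi _i}$ are mutually orthogonal, so that $P_{\psi _i}P_{\psi _j}=\delta _{ij}P_{\psi _i}$, I would compute $A_xA_y=\sum _ic_i^xc_i^yP_{\psi _i}=A_yA_x$, which shows $A$ is commutative.

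For the converse, suppose $A$ is commutative. The key step is simultaneous diagonalization: since $\brac{A_x\colon x\in\Omega}$ is a finite commuting family of self-adjoint operators, there is an orthonormal basis $\brac{\psi _i}$ of $H$ with $A_x=\sum _ia_i^xP_{\psi _i}$, where $a_i^x=\elbows{\psi _i,A_x\psi _i}\in\sqbrac{0,1}$. From $\sum _xA_x=1$ I would read off $\sum _xa_i^x=\elbows{\psi _i,\psi _i}=1$ for each $i$. I would then build the model by hand. Take a probe space $K$ with $\dim K=\dim H$ and an orthonormal basis $\brac{\phi _i}$, and define the candidate pointer observable with value-space $\Omega$ by $F_x=\sum _ia_i^xP_{\phi _i}$. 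One must check that $F\in\oscript (K)$: each $F_x$ is an effect because its eigenvalues $a_i^x$ lie in $\sqbrac{0,1}$, and $\sum _xF_x=\sum _i\paren{\sum _xa_i^x}P_{\phi _i}=\sum _iP_{\phi _i}=1$.

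Finally, I would let $U$ be a von~Neumann operator satisfying \eqref{eq43} for the bases $\brac{\psi _i}$, $\brac{\phi _i}$ (such a $U$ exists, for instance the completely defined one exhibited just after \eqref{eq43}), and form the von~Neumann model $\mscript =(H,K,P_{\phi _1},U,F)$. Applying Theorem~\ref{thm44} to this model, the measured observable is $\sum _i\elbows{\phi _i,F_x\phi _i}P_{\psi _i}=\sum _ia_i^xP_{\psi _i}=A_x$, so $\mscript$ measures $A$. The only substantive point is the simultaneous diagonalization of the commuting effects in the converse direction; once that basis is in hand, the construction of $F$ and the appeal to Theorem~\ref{thm44} are routine, since the freedom in choosing the pointer observable is exactly what allows us to realize an arbitrary observable diagonal in $\brac{\psi _i}$.
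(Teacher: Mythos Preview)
Your proof is correct and follows essentially the same approach as the paper: both directions rely on Theorem~\ref{thm44}'s formula \eqref{eq44}, the converse constructs $K$ with $\dim K=\dim H$, a probe observable $F_x=\sum _i\elbows{\psi _i,A_x\psi _i}P_{\phi _i}$ diagonal in a basis $\brac{\phi _i}$, and a von~Neumann operator for the pair of bases. Your version is slightly more detailed (you explicitly verify $F\in\oscript(K)$ and spell out the commutativity computation), but the argument is the same.
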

\begin{proof}
If $A$ is measured by a von~Neumann model, then we see by \eqref{eq44} that $A$ is commutative. Conversely, suppose
$A$ is commutative. Then the $A_x$, $x\in\Omega$ are simultaneously diagonalizable. Hence, there is a basis
$\brac{\psi _i}$ of $H$ such that
\begin{equation*}
A_x=\sum _i\elbows{\psi _i,A_x\psi _i}P_{\psi _i}
\end{equation*}
for all $x\in\Omega$. Let $K$ be a Hilbert space with $\dim K=\dim H$ and let $\brac{\phi _j}$ be an orthonormal basis for
$K$. We set $\phi _1$ to be the initial probe state and $U$ to be the von~Neumann channel corresponding to
$\brac{\psi _i}$, $\brac{\phi _j}$. Define the probe observable $F$ by
\begin{equation*}
F_x=\sum _j\elbows{\psi _j,A_x\psi _j}P_{\phi _j}
\end{equation*}
Then $\elbows{\phi _i,F_x\phi _i}=\elbows{\psi _i,A_x\psi _i}$ so \eqref{eq44} holds. Hence $A$ is measured by the von~Neumann model $(H,K,P_\phi ,U,F)$.
\end{proof}

A \textit{normal} FIMM has the form $(H,K,P_\phi ,U,F)$ where $P_\phi$ is a pure state, $U$ represents a unitary channel and $F$ is an atomic pointer observable \cite{hz12}.

\begin{thm}    
\label{thm46}
$\iscript\in\rmin (H)$ is a Kraus instrument if and only if $\iscript$ is measured by a normal FIMM.
\end{thm}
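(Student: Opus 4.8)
The plan is to reduce both implications to a single computation: for a normal FIMM $\mscript=(H,K,P_\phi,U,F)$ with atomic pointer $F_x=P_{\xi _x}$, the model instrument $\iscript ^\mscript$ is the Kraus instrument whose operators are $S_x=(I\otimes\bra{\xi _x})U(I\otimes\ket{\phi})$. The backbone is the partial-trace identity $\rmtr _K\sqbrac{X(I\otimes P_\xi)}=(I\otimes\bra{\xi})X(I\otimes\ket{\xi})$, valid for any $X\in\lscript (H\otimes K)$ and unit vector $\xi\in K$, which I would verify by expanding $\rmtr _K$ in an orthonormal basis of $K$ and using $\sum _k\elbows{\xi ,e_k}\bra{e_k}=\bra{\xi}$.

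For the direction ``normal FIMM $\Rightarrow$ Kraus'': since $F$ is atomic, there is an orthonormal basis $\brac{\xi _x}$ of $K$ with $F_x=P_{\xi _x}$ and $\sum _xF_x=1$. Applying the partial-trace identity with $X=U(\rho\otimes P_\phi)U^*$ and using $(I\otimes\ket{\phi})\rho (I\otimes\bra{\phi})=\rho\otimes P_\phi$ gives $\iscript _x^\mscript (\rho )=S_x\rho S_x^*$ with $S_x$ as above. It remains to check $\sum _xS_x^*S_x=1$: this follows because $\sum _xS_x^*S_x=(I\otimes\bra{\phi})U^*\paren{I\otimes\sum _xP_{\xi _x}}U(I\otimes\ket{\phi})=(I\otimes\bra{\phi})U^*U(I\otimes\ket{\phi})=1$, using $\sum _xP_{\xi _x}=1$ on $K$ and the unitarity of $U$. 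Hence $\iscript ^\mscript$ is a Kraus instrument.

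For the converse ``Kraus $\Rightarrow$ normal FIMM'': given $\iscript _x(\rho )=S_x\rho S_x^*$ with $\sum _xS_x^*S_x=1$ and value-space $\Omega _\iscript$, I would take $K$ with $\dim K=\ab{\Omega _\iscript}$, fix an orthonormal basis $\brac{\xi _x\colon x\in\Omega _\iscript}$, set $F_x=P_{\xi _x}$ (an atomic observable), and fix a unit vector $\phi\in K$. The crucial step is to produce the measurement unitary. Define a linear map $V$ on the subspace $H\otimes\ket{\phi}$ by $V(\psi\otimes\phi )=\sum _xS_x\psi\otimes\xi _x$; then $\doubleab{V(\psi\otimes\phi )}^2=\sum _x\elbows{S_x\psi ,S_x\psi}=\elbows{\psi ,\sum _xS_x^*S_x\,\psi}=\doubleab{\psi}^2$, so $V$ is an isometry of $H\otimes\ket{\phi}$ into $H\otimes K$. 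Because $H\otimes K$ is finite-dimensional and the orthogonal complements of the domain and the range of $V$ have equal dimension, $V$ extends to a unitary $U$ on $H\otimes K$. With $\mscript =(H,K,P_\phi ,U,F)$ a normal FIMM, the first paragraph's formula gives model operators $(I\otimes\bra{\xi _x})U(I\otimes\ket{\phi})\psi =(I\otimes\bra{\xi _x})\sum _{x'}S_{x'}\psi\otimes\xi _{x'}=S_x\psi$, so $\iscript ^\mscript =\iscript$ and $\mscript$ measures $\iscript$.

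I expect the main obstacle to be the converse, specifically justifying the isometry-to-unitary extension and organizing the partial trace over the rank-one atomic effects so that a probe basis indexed by $\Omega _\iscript$ reproduces exactly the prescribed Kraus operators. The dimension bookkeeping for $K$ and the normalization $\sum _xP_{\xi _x}=1$ on $K$ are the points where care is needed; everything else is the single computation identifying $S_x=(I\otimes\bra{\xi _x})U(I\otimes\ket{\phi})$.
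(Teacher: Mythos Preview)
Your argument is correct in both directions, and the forward implication is essentially the paper's computation repackaged: the paper expands $\rmtr _K$ in an orthonormal basis and arrives at the operator $S_x\psi=\sum _j\elbows{\psi _j\otimes\phi _x,U(\psi\otimes\phi)}\psi _j$, which is exactly your $(I\otimes\bra{\xi _x})U(I\otimes\ket{\phi})$ written in coordinates.

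The converse, however, is genuinely different. The paper does not build the dilation by hand; instead it invokes Ozawa's theorem to produce a FIMM $(H,K,P_\phi ,U,F)$ with $F$ sharp, and then argues by contradiction that $F$ must already be atomic: if some $F_x$ had rank at least two, the forward computation would yield $\iscript _x(\rho)=S_1\rho S_1^*+S_2\rho S_2^*$, which the paper says conflicts with $\iscript _x$ having a single Kraus operator. Your route is more elementary and self-contained: the isometry $V(\psi\otimes\phi)=\sum _xS_x\psi\otimes\xi _x$ followed by a unitary extension (legitimate in finite dimensions since domain and range have complements of equal dimension) gives a normal FIMM directly, with no appeal to an external dilation theorem and no contradiction step. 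What your approach buys is constructiveness and independence from Ozawa's theorem; what the paper's approach buys is brevity once that theorem is available, at the cost of the somewhat delicate last step (one must know that a rank-one CP map cannot genuinely require two Kraus terms).
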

\begin{proof}
Suppose that $\iscript$ is measured by the normal FIMM$=(H,K,P_\phi ,U,F)$. Letting $\brac{\psi _i}$, $\brac{\phi _i}$ be orthonormal bases for $H,K$, respectively, we employ the following formula for $\rmtr _K$ \cite{hz12}.
\begin{equation*}
\rmtr _K(T)=\sum _{j,k,n}\elbows{\psi _j\otimes\phi _k,T\psi _n\otimes\phi _k}\ket{\psi _j}\bra{\psi _n}
\end{equation*}
Letting $F_x=\ket{\phi _x}\bra{\phi _x}$ and $\psi\in H$, we obtain
\begin{align*}
\iscript _x(P_\psi )&=\rmtr _K\sqbrac{U(P_\psi\otimes P_\phi )U^*(I\otimes F_x)}\\
   &=\sum _{j,k,n}\elbows{\psi _j\otimes\phi _kUP_{\psi\otimes\phi}U^*%
      \paren{I\otimes\ket{\phi _x}\bra{\phi _x}}\psi _n\otimes\phi _k}\ket{\psi _j}\bra{\psi _n}\\
      &=\sum _{j,k,n}\elbows{\psi _j\otimes\phi _k,UP_{\psi\otimes\phi}U^*%
      \paren{\psi _n\otimes\elbows{\phi  _x,\phi _k}\phi _x}}\ket{\psi _j}\bra{\psi _n}\\  
      &=\sum _{j,n}\elbows{\psi _j\otimes\phi _x,UP_{\psi\otimes\phi}U^*(\psi _n\otimes\phi _x)}\ket{\psi _j}\bra{\psi _n}\\
      &=\sum _{j,n}\elbows{\psi _j\otimes\phi _x,U\elbows{\psi\otimes\phi ,U^*(\psi _n\otimes\phi _x)}\psi\otimes\phi}
         \ket{\psi _j}\bra{\psi _n}\\
      &=\sum _{j,n}\elbows{U(\psi\otimes\phi ),\psi _n\otimes\phi _x}\elbows{\psi _j\otimes\phi _x,U(\psi\otimes\phi )}
        \ket{\psi _j}\bra{\psi _n}      
\end{align*}
Define the operator $S_x$ on $H$ by
\begin{equation}                
\label{eq45}
S_x\psi =\sum _j\elbows{\psi _j\otimes\phi _x,U(\psi\otimes\phi)}\psi _j
\end{equation}
We then have that
\begin{equation*}
\iscript _x(P_\psi )=\ket{S_x\psi}\bra{S_x\psi}=S_xP_\psi S_x^*
\end{equation*}
It follows that $\iscript _x(\rho )=S_x\rho S_x^*$ for all $x\in\Omega _\iscript$, $\rho\in\sscript (H)$. Hence, $\iscript$ is a Kraus instrument.

Conversely, if $\iscript\in\rmin (H)$ is a Kraus instrument with $\iscript _x(\rho )=S_x\rho S_x^*$ then by Ozawa's Theorem,
$\iscript$ is measured by a FIMM $\mscript =(H,K,P_\phi ,U,F)$ where $F$ is sharp. If $F$ is not atomic, there is an $x$ such that $F_x$ is not an atom. For simplicity we can assume that $F_x=P_\alpha +P_\beta$. By our previous work, we conclude that there exists operators $S_1,S_2$ on $H$ such that
\begin{equation*}
\iscript _x(\rho )=S_1\rho S_1^*+S_2PS_2^*
\end{equation*}
But this contradicts the fact that $\iscript$ is a Kraus instrument. Hence, $F$ is atomic, so $\mscript$ is normal.
\end{proof}

\begin{cor}    
\label{cor47}
$\iscript\in\rmin (H)$ is a L\"uders instrument if and only if $\iscript$ is measured by a normal FIMM
$\mscript =(H,K,P_\phi ,U,F)$ that satisfies
\begin{equation}                
\label{eq46}
\elbows{\psi\otimes\phi _x,U(\psi\otimes\phi )}\ge 0
\end{equation}
for all $\psi\in H$, $x\in\Omega _\iscript$.
\end{cor}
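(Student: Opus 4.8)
The plan is to recognize that \eqref{eq46} says exactly that the Kraus operators $S_x$ extracted from the model are \emph{positive} operators, and that among Kraus instruments the L\"uders instruments are precisely those admitting positive Kraus operators. The proof then splits into the two implications, with the forward implication requiring an explicit construction rather than an appeal to Theorem~\ref{thm46} alone.

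First I would establish the identity
\[
\elbows{\psi ,S_x\psi}=\elbows{\psi\otimes\phi _x,U(\psi\otimes\phi )}
\]
for every $\psi\in H$. This follows from \eqref{eq45}: that formula gives $\elbows{\psi _i,S_x\psi}=\elbows{\psi _i\otimes\phi _x,U(\psi\otimes\phi )}$, and expanding $\psi=\sum _ic_i\psi _i$ with $c_i=\elbows{\psi _i,\psi}$ and pairing against $\psi$ rather than a single $\psi _i$ collapses the sum to the displayed identity. Consequently \eqref{eq46} holds if and only if $\elbows{\psi ,S_x\psi}\ge 0$ for all $\psi$, which in a complex Hilbert space is equivalent to each $S_x$ being self-adjoint and satisfying $S_x\ge 0$.

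For the implication ($\Leftarrow$), suppose $\iscript$ is measured by a normal FIMM satisfying \eqref{eq46}. By the first half of Theorem~\ref{thm46}, $\iscript$ is the Kraus instrument $\iscript _x(\rho )=S_x\rho S_x^*$ with $S_x$ given by \eqref{eq45}, and by the previous paragraph each $S_x\ge 0$. Writing $A_x=S_x^*S_x=S_x^2=J(\iscript )_x$, positivity of $S_x$ gives $S_x=A_x^{1/2}$, so $\iscript _x(\rho )=A_x^{1/2}\rho A_x^{1/2}=\lscript _x^A(\rho )$ and $\iscript=\lscript ^A$ is a L\"uders instrument.

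The implication ($\Rightarrow$) is where the main obstacle lies, and it is why I would not merely quote Theorem~\ref{thm46}. If $\iscript=\lscript ^A$, then $\iscript$ is certainly a Kraus instrument, so Theorem~\ref{thm46} supplies \emph{some} normal FIMM measuring it; but the single-operator Kraus representation of $\lscript _x^A$ is unique only up to a phase, so the $S_x$ produced by an arbitrary such model may equal $c_xA_x^{1/2}$ with $\ab{c_x}=1$ and thus violate \eqref{eq46}. To get around this I would construct a specific model: choose $K$ with an orthonormal basis $\brac{\phi _x\colon x\in\Omega _\iscript}$, a unit vector $\phi$, and define $U$ on the subspace $H\otimes\phi$ by $U(\psi\otimes\phi )=\sum _xA_x^{1/2}\psi\otimes\phi _x$. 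Because $\sum _xA_x=1$ and the $\phi _x$ are orthonormal, $\doubleab{U(\psi\otimes\phi )}^2=\sum _x\elbows{\psi ,A_x\psi}=\doubleab{\psi}^2$, so $U$ is isometric on $H\otimes\phi$ and extends to a unitary channel on $H\otimes K$; taking $F_x=P_{\phi _x}$, which is atomic, makes $\mscript =(H,K,P_\phi ,U,F)$ a normal FIMM. Feeding this $U$ into \eqref{eq45} yields $S_x\psi=\sum _j\elbows{\psi _j,A_x^{1/2}\psi}\psi _j=A_x^{1/2}\psi$, hence $S_x=A_x^{1/2}\ge 0$; so $\mscript$ measures $\iscript _x(\rho )=A_x^{1/2}\rho A_x^{1/2}=\lscript _x^A(\rho )$ and satisfies \eqref{eq46}, completing the proof. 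The delicate point throughout is thus the phase ambiguity in the forward direction, which I resolve by exhibiting the canonical positive-operator model rather than relying on an arbitrary one.
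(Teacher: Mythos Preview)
Your proof is correct, and the core device---the identity $\elbows{\psi,S_x\psi}=\elbows{\psi\otimes\phi_x,U(\psi\otimes\phi)}$ together with the resulting equivalence between \eqref{eq46} and $S_x\ge 0$---is exactly what the paper uses. Where you go beyond the paper is in the forward direction: the paper simply asserts that $S_x\ge 0$ ``is equivalent to $\iscript$ being a L\"uders instrument'' and stops, never addressing the phase ambiguity you correctly flag. Your explicit dilation $U(\psi\otimes\phi)=\sum_xA_x^{1/2}\psi\otimes\phi_x$ with $F_x=P_{\phi_x}$ supplies precisely the normal FIMM whose Kraus operators come out as $A_x^{1/2}$ on the nose, so that \eqref{eq46} is visibly satisfied. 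In effect the paper's proof tacitly presumes that \emph{some} normal model with positive $S_x$ exists, and your construction is what actually justifies that presumption; your argument is therefore the more complete of the two.
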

\begin{proof}
Letting $S_x$ be the operator given by \eqref{eq45} we have that
\begin{align*}
\elbows{\psi ,S_x\psi}&=\sum _j\elbows{\psi _j\otimes\phi _x,U(\psi\otimes\phi )}\elbows{\psi ,\psi _j}\\
   &\elbows{\psi\otimes\phi _x,U(\psi\otimes\phi )}
\end{align*}
Hence, \eqref{eq46} holds if and only if $S_x\ge 0$ for all $x\in\Omega _\iscript$. When this is the case we have that
$\iscript _x(\rho )=A_x^{1/2}\rho A_x^{1/2}$ for the effect $A_x=S_x^2$ which is equivalent to $\iscript$ being a L\"uders instrument.
\end{proof}

A unitary operator $U\colon H\otimes H\to H\otimes H$ that satisfies $U(\psi\otimes\phi )=\phi\otimes\psi$ is called a
\textit{swap} operator. It is easy to show that a swap operator $U$ satisfies $U(\rho\otimes\eta )U^*=\eta\otimes\rho$ for all
$\eta ,\rho\in\sscript (H)$. A \textit{trivial} FIMM has the form $\mscript =(H,H,\eta ,U,F)$ where $U$ is a swap operator.

\begin{thm}    
\label{thm48}
An instrument $\iscript$ is trivial if and only if $\iscript$ is measured by a trivial {\rm FIMM}.
\end{thm}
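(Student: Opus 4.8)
The plan is to prove both directions by directly computing the model instrument of a trivial FIMM, the only real input being the swap identity $U(\rho\otimes\eta)U^*=\eta\otimes\rho$ recorded just above the statement.

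First I would treat the direction that a trivial FIMM measures a trivial instrument. Suppose $\iscript$ is measured by a trivial FIMM $\mscript =(H,H,\eta ,U,F)$, so that the interaction channel is $\nu (\rho ')=U\rho 'U^*$ for a swap operator $U$. Substituting into the defining formula for the model instrument and applying the swap identity gives
\begin{align*}
\iscript _X(\rho )&=\rmtr _K\sqbrac{U(\rho\otimes\eta )U^*(I\otimes F_X)}
   =\rmtr _K\sqbrac{(\eta\otimes\rho )(I\otimes F_X)}\\
   &=\rmtr _K\sqbrac{\eta\otimes (\rho F_X)}=\rmtr (\rho F_X)\,\eta .
\end{align*}
In particular $\iscript _x(\rho )=\rmtr (\rho F_x)\eta$, which is exactly the form of a trivial instrument with observable $A=F\in\oscript (H)$ and fixed output state $\alpha =\eta\in\sscript (H)$.

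For the converse, suppose $\iscript$ is a trivial instrument, say $\iscript _x(\rho )=\rmtr (\rho A_x)\alpha$ for some $A\in\oscript (H)$ and $\alpha\in\sscript (H)$. I would construct the trivial FIMM $\mscript =(H,H,\alpha ,U,A)$, taking the probe space equal to $H$, the initial probe state equal to $\alpha$, the channel given by any swap operator $U$ on $H\otimes H$, and the pointer observable $F=A\in\oscript (H)$. Applying the computation of the first paragraph to this $\mscript$ yields the model instrument $\iscript _x^\mscript (\rho )=\rmtr (\rho A_x)\alpha =\iscript _x(\rho )$, so $\mscript$ measures $\iscript$, as required.

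There is no substantive obstacle here; the argument is a short direct calculation. The only point requiring care is the bookkeeping in the partial trace: one must apply the swap identity at the level of density operators, so that the incoming state $\rho$ is moved into the probe slot that $\rmtr _K$ traces out while the fixed probe state $\eta$ survives in the base slot, rather than reasoning only with the vectors $\psi\otimes\phi$. Once the two tensor factors are tracked correctly, both implications follow immediately.
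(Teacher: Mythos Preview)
Your proof is correct and essentially identical to the paper's: both directions are handled by the same direct computation using the swap identity $U(\rho\otimes\eta)U^*=\eta\otimes\rho$, and for the converse you construct the same trivial FIMM $\mscript =(H,H,\alpha ,U,A)$ that the paper does.
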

\begin{proof}
Suppose $\iscript$ is measured by a trivial FIMM $\mscript =(H,H,\eta ,U,F)$. Letting $\rmtr _H$ be the partial trace over the second Hilbert space, we have that
\begin{align*}
\iscript _x(\rho )&=\rmtr _H\sqbrac{U(\rho\otimes\eta )U^*(I_1\otimes F_x)}
     =\rmtr _H\sqbrac{(\eta\otimes\rho )(I_1\otimes F_x)}\\
     &=\rmtr _H(\eta\otimes\rho F_x)=\rmtr (\rho F_x)\eta
\end{align*}
Then $\iscript$ is trivial. Conversely, let $\iscript\in\rmin (H)$ be trivial with $\iscript _x(\rho )=\rmtr (\rho A_x)\alpha$. For the trivial FIMM $\mscript =(H,H,\alpha ,U,A)$, our previous calculation shows that
\begin{equation*}
\rmtr _H\sqbrac{U(\rho\otimes\alpha )U^*(I_1\otimes A_x)}=\iscript _x(\rho )
\end{equation*}
We conclude that $\iscript$ is measured by $\mscript$.
\end{proof}

We have show that when $\mscript =(H,H,\eta ,U,F)$ is trivial, then the instrument measured by $\mscript$ has observable $F$ and state $\eta$.


\begin{thebibliography}{99}
\bibitem{bgl95}P.~Busch, M.~Grabowski and P.~Lahti, \textit{Operational Quantum Physics}, Springer-Verlag, Berlin,
 1995.
\bibitem{gg02}S.~Gudder and R.~Greechie, Sequential Products on effect algebras, \textit{Rep.~Math.~Phys.}
\textbf{49}, 87--111 (2002).
\bibitem{gn01}S.~Gudder and G.~Nagy, Sequential quantum measurements, \textit{J.~Math.~Phys.}
\textbf{42}, 5212--5222 (2001).
\bibitem{gud120}S.~Gudder, Conditioned observables in quantum mechanics, arXiv:quant-ph 2005.04775 (2020).
\bibitem{gud220}S.~Gudder, Quantum instruments and conditioned observables, arXiv:quant-ph 2005.08117 (2020).
\bibitem{hz12}T.~Heinosaari and M.~Ziman, \textit{The Mathematical Language of Quantum Theory}, Cambridge University Press, Cambridge, 2012.
\bibitem{kra83}K.~Kraus, \textit{States, Effects and Operations}, Springer-Verlag, Berlin, 1983.
\bibitem{lud51}G.~L\"uders, \"Uber due Zustands\"anderung durch den Messprozess, \textit{Ann.~Physik}
\textbf{6}, 322--328 (1951).
\bibitem{nc00}M.~Nielson and I.~Chuang, Quantum Computation and Quantum Information, Cambridge University Press, Cambridge, 2000.
\bibitem{wf89}W.~Wootters and B.~Fields, Optimal state-determination by mutually unbiased measurements,\textit{Ann.~Phys.} \textbf{191}, 363--381 (1989).

\end{thebibliography}
\end{document}